\documentclass[11pt]{article}
\usepackage{amsfonts,amsmath,amssymb}
\usepackage{xcolor}
\usepackage{graphicx}
\usepackage{rotating}
\usepackage{multicol}
\usepackage{float}
\usepackage{multirow}
\usepackage{enumerate}
\usepackage{wasysym}
\usepackage{subfigure}
\usepackage[normalem]{ulem}
\usepackage[flushleft]{threeparttable} 
\usepackage[authoryear,round]{natbib}
\usepackage[colorlinks, citecolor=blue, linkcolor=blue]{hyperref}
\usepackage{xr}
\usepackage{bbm}
\usepackage{todonotes}

\DeclareMathOperator*{\E}{\mathbb{E}} 
\DeclareMathOperator*{\1}{\mathbbm{1}} 

\allowdisplaybreaks
\numberwithin{equation}{section}
\usepackage{ntheorem}
\theoremseparator{:}

\newtheorem{thm}{Theorem}
\newtheorem{cl}{Corollary}
\newtheorem{lm}{Lemma}
\newtheorem{df}{Definition}
\newtheorem{pp}{Proposition}

\newtheorem{as}{Assumption}[section]

\allowdisplaybreaks

\newcommand{\blue}{\textcolor{blue}}

\newcommand{\qed}{\hfill \ensuremath{\Box}}

\begin{document}
\author{Nan Liu\footnote{Nan Liu, Assistant Professor, Paula and Gregory Chow Institute for Studies in Economics, Wang Yanan Institute for Studies in Economics (WISE), Department of Statistics \& Data Science, School of Economics, Xiamen University.}
\hspace{1cm}
Yanbo Liu\footnote{Yanbo Liu, Associate Professor, School of Economics, Shandong University.}
\hspace{1cm}
Yuya Sasaki\footnote{Yuya Sasaki, Brian and Charlotte Grove Chair and Professor of Economics, Department of Economics, Vanderbilt University.}
\hspace{1cm}
Yuanyuan Wan\footnote{Yuanyuan Wan, Associate Professor, Department of Economics, University of Toronto.}}
\title{{\Large \textbf{Root-$n$ Asymptotically Normal Maximum Score Estimation\thanks{Sasaki gratefully acknowledges research support from Brian and Charlotte Grove.}}}}

\maketitle

\begin{abstract}
The maximum score method \citep{manski1975maximum,manski1985semiparametric} is a powerful approach for binary choice models, yet it is known to face both practical and theoretical challenges. In particular, the estimator converges at a slower-than-root-$n$ rate to a nonstandard limiting distribution. 
We investigate conditions under which strictly concave surrogate score functions can be employed to achieve identification through a smooth criterion function. 
This criterion enables root-$n$ convergence to a normal limiting distribution.
While the conditions to guarantee these desired properties are nontrivial, we characterize them in terms of primitive conditions.
Extensive simulation studies support, the root-$n$ convergence rate, the asymptotic normality, and the validity of the standard inference methods.
\vskip0.4cm
	
\noindent \textit{JEL classification:} C14, C25
\noindent \newline \textit{Keywords:} binary choice, maximum score, root-$n$ asymptotic normality, surrogate method.
	
\vskip1cm
	
\baselineskip=15pt
\end{abstract}

\newpage
%%%%%%%%%%%%%%%%%%%%%%%%%%%%%%%%%%%%%%%%%
\section{Introduction}\label{sec:introduction}
%%%%%%%%%%%%%%%%%%%%%%%%%%%%%%%%%%%%%%%%%

The maximum score method \citep*{manski1975maximum,manski1985semiparametric} is a powerful approach for identifying and estimating the parameters of binary choice models without imposing distributional assumptions on the error term. 
At the same time, it is well known to face several challenges in both theory and practice.  In particular, \citet*{kim1990cube} show that the estimated parameters for the maximum score estimands (and other related estimands) converge at the cube-root-$n$ rate to a nonstandard (non-Gaussian) limiting distribution. 
This behavior arises from the discontinuous sample criterion involving the indicator function.
Moreover, the na\"ive bootstrap is not valid in this setting \citep*[see][]{abrevaya2005bootstrap, leger2006bootstrap}.

The literature has developed several methods for estimation and inference in this context, including the smoothing approach \citep*{horowitz1992smoothed}, the subsampling approach \citep*{delgado2001subsampling, seo2018local}, the $m$-out-of-$n$ bootstrap approach \citep*{lee2006m, bickel2011resampling}, modified bootstrap procedures \citep*{cattaneo2020bootstrap, cattaneo2024bootstrap, cheng2024inference},\footnote{See also \citet{cattaneo2026continuity} for the validity of confidence intervals constructed by the modified bootstrap.} and finite-sample distribution \citep*{rosen2025}, among others, designed to accommodate or address the nonstandard limit property.

The statistical learning literature proposes the use of surrogate loss functions in place of the indicator function \citep*[e.g.,][]{lugosi2004bayes, zhang2004statistical, steinwart2005consistency, bartlett2006convexity, zhao2012estimating} to enable convex optimization. 
This idea is well suited to the maximum score and related methods and has also been adopted in econometric work in related contexts.

\citet*{babii2020binary} introduce surrogate loss functions for binary classification in rich or nonparametric function classes and analyze the resulting trade-offs between classification risk and complexity.
\citet*{kitagawa2023constrained} study surrogate losses in constrained classification settings that may involve misspecification of the classifier class, and show that the hinge loss is essentially the only surrogate that preserves consistency in such environments.
\citet*{chen2025relu} employ the ReLU surrogate, propose a two-step estimation procedure, and derive asymptotic normality at the rate $n^{s/(2s+1)}$, where $s$ denotes the smoothness order of the conditional expectation function of the binary choice $Y$ given $X$.
\citet*{liu2025nonparametric} consider a nonparametric class of classifiers and advocate strictly convex surrogate functions to achieve point identification of a nonparametric representative, from which they derive standard nonparametric rates of convergence.

This paper focuses on a parametric family of classifiers and establishes conditions under which strictly convex surrogate score functions can point identify a representative classifying parameter vector, so that the resulting one-step estimator achieves root-$n$ asymptotic normality without the need of nonparametric nuisance estimation. 
This objective differs from those of the papers discussed in the previous paragraph. 

Indeed, like \citet*{liu2025nonparametric}, we advocate strictly convex surrogate score functions, which rule out the hinge and ReLU losses in particular. 
However, establishing the validity of such surrogate methods is more challenging in the parametric framework:
whereas the pointwise minimum-risk preservation property of surrogate losses as in \citet*{bartlett2006convexity} is directly applicable in the nonparametric setting, that pointwise argument does not apply under parametric restrictions. 
To validate the surrogate approach we need to impose restrictions on the class of distributions of $X$. 
Characterizing such conditions is the central contribution of this paper.

Under these conditions, the parameters can be estimated via a single-step maximization of the strictly concave and smooth surrogate score, without requiring nonparametric nuisance estimation. 
Consequently, the root-$n$ rate of convergence can be established using standard arguments. 
Note that the availability of these ``standard arguments'' stands in contrast to the maximum score literature, where the asymptotic properties are known to be nonstandard.
Extensive simulation studies further demonstrate that, under these conditions, our proposed method achieves root-$n$ convergence to a normal limiting distribution.

Indeed, \citet*{horowitz1992smoothed} notes that the convergence rate cannot be faster than the nonparametric rate $n^{-s/(2s+1)}$ in the minimax sense, where $s$ denotes the smoothness order of the conditional distribution function of the error $\varepsilon$ given $X$.  Root-n consistency, however, can be achieved under additional assumptions, such as with the single index specification \citep[e.g.,][]{klein1993efficient}. In this paper, we provide a novel identification approach that builds on conditions that are comparable to those in \citet{klein1993efficient}. Specifically, we show that these restrictions, stated as conditions (T.\ref{thm:main}.1) and (T.\ref{thm:main}.2) in Theorem \ref{thm:main}, ensure that the smooth surrogate maximum score identifies the original parameters, thereby yielding root-$n$ consistency and asymptotic normality. Furthermore, our estimation procedure does not involve any infinite dimensional nuisance parameters. Computing our estimator only requires maximizing a smooth objective function over a finite-dimensional parameter space, with a unique maximizer and no need for trimming or tuning.

Thus, our claim does not contradict the existing knowledge in the literature.
We view the alternative approaches in the literature discussed above as complementary to our work. 
They impose different restrictions on the data-generating processes, which in turn lead to different identification strategies, estimation methods, asymptotic properties, and/or research objectives.

In summary, we investigate conditions under which a strictly concave surrogate maximum score identifies a solution to the original maximum score problem, thereby enabling root-$n$ asymptotic normality of the sample counterpart one-step estimator. 
Consequently, standard inference methods (such as the bootstrap) are valid, and there is no need to select tuning parameters for subsampling or smoothing, unlike in conventional maximum score methods.
An important advantage of the availability of standard asymptotic results is that they facilitate implementation using widely used statistical software packages, such as Stata, whose default output assumes a normal limiting distribution.

\bigskip\noindent
{\bf Organization:}
The rest of the paper is organized as follows. 
Section \ref{sec:setup} introduces the model setup. 
Section \ref{sec:main} presents the main results. 
Section \ref{sec:primitive} discusses primitive conditions for our key restrictions and provides illustrative examples. 
Section \ref{sec:estimation} introduces the sample analog estimator and establishes its asymptotic properties. 
Section \ref{sec:simulation} presents extensive simulation studies. 
Section \ref{sec:conclusion} concludes. 
All mathematical proofs are provided in the appendix.

%%%%%%%%%%%%%%%%%%%%%%%%%%%%%%%%%%%%%%%%%
\section{Maximum Score and Its Surrogate Counterpart}\label{sec:setup}
%%%%%%%%%%%%%%%%%%%%%%%%%%%%%%%%%%%%%%%%%

Consider the threshold-crossing binary choice model
\begin{equation}\label{eq:threshold_crossing}
Y = \1\{ X'b_0 + \varepsilon \ge 0\}
\end{equation}
where $Y$ is a $\{0,1\}$-valued choice variable, $X$ is a $\mathbb{R}^d$-valued explanatory variables, $b_0$ is an unknown parameter vector, and $\varepsilon$ is an unobserved error satisfying the conditional median restriction
\begin{equation}\label{eq:conditional_median}
\mathrm{Median}(\varepsilon|X) = 0.
\end{equation}
For this model, Manski develops the maximum score method to identify and estimate $b_0$.

The maximum score method solves 
$\max_{b \in B} Q_0(b)$, where $Q_0$ is defined by
\begin{align}\label{eq:q0}
Q_0(b) 
= \mathbb{E}\big[ Y \cdot \1\{X'b \ge 0\} 
+ (1-Y) \cdot \1\{X'b < 0\} \big].
\end{align}
When this problem is replaced by its sample counterpart, both practical and theoretical difficulties arise. 
Because the score is defined through an indicator function, which is discontinuous, the resulting optimization problem is non-convex. 
Moreover, the resulting estimator converges at a nonstandard rate that is slower than $\sqrt{n}$ and has a nonstandard limiting distribution.

To circumvent these difficulties, the statistical learning literature proposes replacing the indicator function with a continuous surrogate loss. 
Let $\phi:\mathbb{R}\to\mathbb{R}$ be a measurable surrogate function. 
Define the surrogate objective function $Q_\phi$ by
\begin{align}\label{eq:q_phi}
Q_\phi(b)
= \mathbb{E}\big[ Y \cdot \phi(X'b) + (1-Y) \cdot \phi(-X'b) \big].
\end{align}
We then define the surrogate maximum score method as the solution to
$
\max_{b \in B} Q_\phi(b).
$

There is, however, a natural question at this point. 
Are the solutions to the surrogate problem guaranteed to also solve the original maximum score problem? 
If the answer to this question is `no,' then the surrogate method would generally yield biased estimates and would therefore be of limited usefulness. 
On the other hand, if the answer is `yes,' then the surrogate approach could indeed help overcome the difficulties associated with the maximum score method.

We argue that, under certain conditions, the surrogate maximum score method indeed yields solutions to the original maximum score problem -- Section \ref{sec:validity}. 
Furthermore, the surrogate problem allows for point identification without imposing additional restrictions -- Section \ref{sec:existence_uniqueness}. 
The key conditions under which these desirable results hold are nontrivial. 
Nevertheless, they accommodate a wide class of distributions for $X$ -- Section \ref{sec:primitive}.

%%%%%%%%%%%%%%%%%%%%%%%%%%%%%%%%%%%%%%%%%
\section{Main Results}\label{sec:main}
%%%%%%%%%%%%%%%%%%%%%%%%%%%%%%%%%%%%%%%%%
\subsection{Validity of the Surrogate Maximum Score}\label{sec:validity}
%%%%%%%%%%%%%%%%%%%%%%%%%%%%%%%%%%%%%%%%%

Denoting the conditional choice probability by
$$
\eta(x) = \Pr(Y=1|X=x),
$$
consider the following assumption on the Bayes boundary.

%%%%%%%%%%%%%%%%%%%%%%%%%%%%%%%%%%%%%%%%%
\begin{as}[Bayes Boundary]\label{as:bayes}
Let $B \subseteq \mathbb{R}^d$ be nonempty and compact.
There exists $b_0 \in B$ such that the following conditions hold.
\begin{enumerate}[(i)]
\item\label{as:bayes:correct} $\1\{\eta(X) \ge 1/2\} = \1\{X'b_0 \ge 0\}$ a.s.
\item\label{as:bayes:nondegeneracy} $\Pr(X'b_0 > 0)>0$ and $\Pr(X'b_0 < 0) > 0$.
\end{enumerate}
\end{as}
%%%%%%%%%%%%%%%%%%%%%%%%%%%%%%%%%%%%%%%%%

Assumption \ref{as:bayes} is the standard assumption employed in the literature on cost-sensitive binary classification, and in particular, maximum score estimation.
Assumption \ref{as:bayes} \eqref{as:bayes:correct} requires the linear Bayes boundary,
that is, the linear classification $x \mapsto x'b$ is correctly specified with a representative \textit{true} parameter vector $b=b_0$.
In the literature on maximum score estimation, this assumption is essentially equivalent to what is known as the conditional median restriction \eqref{eq:conditional_median}.\footnote{\label{footnote:conditional_median}Indeed, in the threshold crossing model \eqref{eq:threshold_crossing}, the conditional median restriction \eqref{eq:conditional_median} implies Assumption \ref{as:bayes} \eqref{as:bayes:correct} under the regularity conditions that $\Pr(X'b_0=0)=0$ and $F_{\varepsilon|X}( \ \cdot \ |X)$ is strictly increasing at 0 almost surely.}
Assumption \ref{as:bayes} \eqref{as:bayes:nondegeneracy} requires nondegeneracy of the Bayes boundary, that is, there are non-trivial sub-populations on both sides across the classification boundary.

Let $Q_0$ denote the maximum score objective function defined in \eqref{eq:q0}.
Let $Q_\phi$ denote the surrogate objective function defined in \eqref{eq:q_phi}.
The following lemma shows that surrogate maximum score solutions can characterize the original maximum score solutions up to scale under two key conditions.

%%%%%%%%%%%%%%%%%%%%%%%%%%%%%%%%%%%%%%%%%
\begin{lm}[Validity of the Surroate Maximum Score]\label{lm:parallel}
Suppose that Assumption \ref{as:bayes} and the following two conditions are satisfied.
\begin{enumerate}[(L.\ref{lm:parallel}.1)]
\item[(L.\ref{lm:parallel}.1)] If $b_1,b_2 \in B$ are not parallel, then
$\Pr(\1\{X'b_1 \ge 0\} \neq \1\{X'b_2 \ge 0\}) > 0$.
\item[(L.\ref{lm:parallel}.2)] Every $b_\phi \in \arg\max_{b \in B} Q_\phi(b)$ satisfies $\1\{X'b_\phi \ge 0\} = \1\{\eta(X) \ge 1/2\}$ a.s.
\end{enumerate}
Then, for every $b_\phi \in \arg\max_{b \in B} Q_\phi(b)$, there exists a scalar $c>0$ such that $b_\phi = cb_0$, and $\arg\max_{b \in B} Q_\phi(b) \subseteq \arg\max_{b \in B} Q_0(b)$.
\end{lm}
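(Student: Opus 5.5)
Fix any $b_\phi \in \arg\max_{b\in B} Q_\phi(b)$. The plan is to prove the second claim first (that $b_\phi$ maximizes $Q_0$) and then derive the parallelism $b_\phi = c b_0$ from (L.\ref{lm:parallel}.1).

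\textbf{Step 1: $b_0$ and $b_\phi$ maximize $Q_0$.} Conditioning on $X$, write
\begin{align*}
Q_0(b) = \mathbb{E}\big[\eta(X)\1\{X'b\ge 0\} + (1-\eta(X))\1\{X'b<0\}\big].
\end{align*}
Pointwise in $x$, the integrand is at most $\max\{\eta(x),1-\eta(x)\}$. This bound is attained by the Bayes rule $\1\{\eta(x)\ge 1/2\}$ (ties at $\eta=1/2$ are harmless). Hence
\begin{align*}
Q_0(b) \le \mathbb{E}\max\{\eta(X),1-\eta(X)\} \quad \text{for all } b \in B.
\end{align*}
By Assumption \ref{as:bayes}(\ref{as:bayes:correct}), $\1\{X'b_0\ge0\}$ coincides a.s.\ with the Bayes rule, so $Q_0(b_0)$ equals the upper bound and $b_0\in\arg\max_B Q_0$. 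By (L.\ref{lm:parallel}.2), $\1\{X'b_\phi\ge 0\} = \1\{\eta(X)\ge 1/2\}$ a.s., so $Q_0(b_\phi)=Q_0(b_0)$ and $b_\phi \in \arg\max_B Q_0$. This gives the inclusion.

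\textbf{Step 2: $b_\phi$ is parallel to $b_0$.} Combining (L.\ref{lm:parallel}.2) with Assumption \ref{as:bayes}(\ref{as:bayes:correct}) yields
\begin{align*}
\1\{X'b_\phi\ge0\}=\1\{X'b_0\ge0\} \quad \text{a.s.}
\end{align*}
The contrapositive of (L.\ref{lm:parallel}.1) then says $b_\phi$ and $b_0$ are parallel. The one subtlety is the meaning of ``parallel.'' I read it as $b_\phi = c b_0$ for some real $c$, possibly with $c\le 0$, and I will also allow for the degenerate case $b_\phi=0$. So it remains to rule out $c<0$ and $c=0$.

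\textbf{Step 3: ruling out $c\le 0$.} This is the main, though modest, obstacle, and Assumption \ref{as:bayes}(\ref{as:bayes:nondegeneracy}) is what handles it.
\begin{itemize}
\item If $c=0$ (i.e.\ $b_\phi=0$), then $\1\{X'b_\phi\ge0\}=1$ always. Agreement with $\1\{X'b_0\ge 0\}$ a.s.\ would force $\Pr(X'b_0<0)=0$, contradicting (\ref{as:bayes:nondegeneracy}).
\item If $c<0$, then on the event $\{X'b_0>0\}$, which has positive probability, we have $X'b_\phi<0$. So the indicators disagree there, a contradiction.
\end{itemize}
Hence $c>0$ and $b_\phi=cb_0$. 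Since $b_\phi$ was an arbitrary maximizer, both conclusions of the lemma hold.
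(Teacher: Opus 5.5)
Your proposal is correct and follows essentially the same route as the paper's proof. You combine (L.\ref{lm:parallel}.2) with Assumption \ref{as:bayes}(\ref{as:bayes:correct}) to get a.s.\ agreement of the indicators, invoke (L.\ref{lm:parallel}.1) for parallelism, use Assumption \ref{as:bayes}(\ref{as:bayes:nondegeneracy}) to exclude a nonpositive scale, and show that $b_\phi$ attains the Bayes-optimal value of $Q_0$; your pointwise $\max\{\eta,1-\eta\}$ bound is equivalent to the paper's rewriting $Q_0(b)=\E[1-\eta(X)]+2\E[(\eta(X)-1/2)\1\{X'b\ge 0\}]$, and your explicit handling of the degenerate case $b_\phi=0$ tidies up a point the paper passes over when it asserts $c\neq 0$.
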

%%%%%%%%%%%%%%%%%%%%%%%%%%%%%%%%%%%%%%%%%

Proof of this lemma is provided in Appendix \ref{sec:lm:parallel}.
Here, in the main text, we sketch the essence of the proof focusing on the roles played by the two key conditions, (L.\ref{lm:parallel}.1) and (L.\ref{lm:parallel}.2).

By way of contradiction, suppose that $b_\phi \in \arg\max_{b \in B} Q_\phi(b)$ does not parallel $b_0$.
Then, Condition (L.\ref{lm:parallel}.1) implies
$$
\Pr(\1\{X'b_\phi \ge 0\} \neq \1\{X'b_0 \ge 0\} ) > 0.
$$
On the other hand, Assumption \ref{as:bayes} \eqref{as:bayes:correct} and Condition (L.\ref{lm:parallel}.2) imply
$$
\1\{X'b_\phi \ge 0\} = \1\{X'b_0 \ge 0\} \qquad\text{a.s.},
$$
a contradiction.
Hence, $b_\phi = c b_0$ must hold for some scalar $c \neq 0$. 
The formal proof in the appendix further establishes $c>0$.

That is, any surrogate solution $b_\phi \in \arg\max_{b \in B} Q_\phi$ should coincide with the representative true parameter vector $b_0$ up to scale $c$, and therefore, it should belong to the set of solutions to the original maximum score problem.

%%%%%%%%%%%%%%%%%%%%%%%%%%%%%%%%%%%%%%%%%
\subsection{Existence and Uniqueness of the Surrogate Maximum Score Solution}\label{sec:existence_uniqueness}
%%%%%%%%%%%%%%%%%%%%%%%%%%%%%%%%%%%%%%%%%

Lemma \ref{lm:parallel} would be vacuous if the set $\arg\max_{b \in B} Q_\phi(b)$ of the surrogate maxima were empty. 
We are going to show that it is not empty under a suitable condition.
Furthermore, under additional conditions, it is not only nonempty, but also a singleton.
In other words, we etablish that the surrogate maximum score problem guarantees a unique solution, unlike the original maximum score problem.
To this goal, however, we need to choose a reasonable surrogate score function $\phi$ as formally outlined in the following assumption. 

%%%%%%%%%%%%%%%%%%%%%%%%%%%%%%%%%%%%%%%%%
\begin{as}[Surrogate Score Function]\label{as:surrogate}
$\phi$: $\mathbb{R} \to \mathbb{R}$ is strictly concave, strictly increasing and differentiable at 0 with $\phi'(0)>0$.
\end{as}
%%%%%%%%%%%%%%%%%%%%%%%%%%%%%%%%%%%%%%%%%
To satisfy Assumption \ref{as:surrogate}, we can take $\phi$ to be the negative of a common loss function $\ell$, that is, $\phi(u)=-\ell(u)$, where $\ell$ is strictly convex, strictly decreasing, and differentiable at $0$ with $\ell'(0)<0$. 
Examples include the following:
\begin{enumerate}[(1)]
	\item \textbf{Logistic loss}: $\ell(u)=\frac{1}{a} \log \left(1+e^{-a u}\right), a>0$;
	\item \textbf{Pseudo-Huber loss}: $\ell(u)=\sqrt{a^2+u^2}-u, a>0$;
	\item \textbf{Probit loss}: $\ell(u)=-\log\Phi(au)$, $a>0$, where $\Phi(\cdot)$ denotes the cumulative distribution function of the standard normal distribution.
\end{enumerate}

On the other hand, our requirements rule out other popular loss functions, such as the hinge loss, the ReLU loss, and square loss functions.
Even though the exponential loss function satisfies Assumption \ref{as:surrogate}, it does not generally satisfy an additional assumption to be required later (Assumption \ref{as:asymptotic} \eqref{as:asymptotic4}), and hence we desist from listing it as an example here.

In addition to the above requirements for surrogate score function $\phi$, we also impose the following regularity conditions.
%%%%%%%%%%%%%%%%%%%%%%%%%%%%%%%%%%%%%%%%%
\begin{as}[Regularity Conditions]\label{as:moments}
The following conditions hold.
\begin{enumerate}[(i)]
\item\label{as:moments:bounded} $\E\left[\sup_{b \in B} (|\phi(X'b)| + |\phi(-X'b)|)\right] < \infty$.
\item\label{as:moments:convex} $B = \{b \in \mathbb{R}^d : \|b\| \leq R\}$ for some $R \in (0,\infty)$.
\item\label{as:moments:index} For every $v \neq 0$ in $\mathbb{R}^d$, $\Pr(X'v \neq 0) > 0$.
\end{enumerate}
\end{as}
%%%%%%%%%%%%%%%%%%%%%%%%%%%%%%%%%%%%%%%%%
Assumption \ref{as:moments} \eqref{as:moments:bounded} requires the bounded moments of the surrogate score, and is used along with Assumption \ref{as:surrogate} to ensure the continuity of the surrogate objective function $Q_\phi$.
Assumption \ref{as:moments} \eqref{as:moments:convex} is under a researcher's control, and implies the convexity of the parameter set $B$ in particular.
For the sake of Lemma \ref{lm:existence_uniqueness} below, only the concavity of $B$ is needed. 
We state this particular form of $B = \{b \in \mathbb{R}^d : \|b\| \leq R\}$ in Assumption \ref{as:moments} \eqref{as:moments:convex} for the purpose of proving Proposition \ref{pp:surrogate} in Section \ref{sec:sufficient2} later.
Assumption \ref{as:moments} \eqref{as:moments:index} requires that the classifier $X'v$ is not degenerate for any possible parameter vector $v \neq 0$.
Assumption \ref{as:moments} \eqref{as:moments:convex}--\eqref{as:moments:index} serve to ensure the strict concavity of the surrogate objective function $Q_\phi$ on $B$.
With the continuity and strict concavity of $Q_\phi$ established as such, we can establish the following existence and uniqueness properties.

%%%%%%%%%%%%%%%%%%%%%%%%%%%%%%%%%%%%%%%%%
\begin{lm}[Existence and Uniqueness of the Surrogate Maximum]\label{lm:existence_uniqueness}
If Assumptions \ref{as:bayes}, \ref{as:surrogate}, and \ref{as:moments} \eqref{as:moments:bounded} hold, then a solution to $\max_{b \in B} Q_\phi(b)$ exists.
If Assumption \ref{as:moments} \eqref{as:moments:convex}--\eqref{as:moments:index} hold in addition, then $\max_{b \in B} Q_\phi(b)$ admits a unique solution.
\end{lm}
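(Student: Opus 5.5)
Existence will come from the Weierstrass theorem, so the first task is to show that $Q_\phi$ is continuous on the compact set $B$ (compactness is part of Assumption \ref{as:bayes}). Since $\phi$ is concave on all of $\mathbb{R}$ by Assumption \ref{as:surrogate}, it is continuous on $\mathbb{R}$. Hence for each realization $(X,Y)$ the map
$$b \mapsto Y\phi(X'b) + (1-Y)\phi(-X'b)$$
is continuous in $b$. Now take any $b_k \to b$ in $B$. The integrands converge pointwise and are dominated by $\sup_{b\in B}(|\phi(X'b)|+|\phi(-X'b)|)$, which is integrable by Assumption \ref{as:moments} \eqref{as:moments:bounded}. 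Dominated convergence then gives $Q_\phi(b_k)\to Q_\phi(b)$. So $Q_\phi$ is continuous and finite on the compact $B$, and a maximizer exists. Assumption \ref{as:bayes} is used only for compactness (and nonemptiness) of $B$.

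For uniqueness, I will show that $Q_\phi$ is strictly concave on the convex set $B$ (Assumption \ref{as:moments} \eqref{as:moments:convex}); a strictly concave function has at most one maximizer on a convex set. Fix $b_1\neq b_2$ in $B$ and $\lambda\in(0,1)$, and set $v=b_1-b_2\neq 0$. Concavity of $\phi$ gives, pointwise,
$$\phi(X'(\lambda b_1+(1-\lambda)b_2)) \ge \lambda\phi(X'b_1)+(1-\lambda)\phi(X'b_2).$$
On the event $\{X'v\neq 0\}$ we have $X'b_1\neq X'b_2$, so strict concavity makes this inequality strict. The same holds for $\phi(-X'\cdot)$. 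Multiplying by the weights $Y$ and $1-Y$, which are nonnegative and sum to one, the combined integrand satisfies the weak inequality everywhere and the strict inequality on $\{X'v\neq0\}$. That event has positive probability by Assumption \ref{as:moments} \eqref{as:moments:index}. All terms are integrable, so taking expectations yields
$$Q_\phi(\lambda b_1+(1-\lambda)b_2) > \lambda Q_\phi(b_1)+(1-\lambda)Q_\phi(b_2).$$
Finally, if $b_1\neq b_2$ were both maximizers, their midpoint (which lies in $B$) would give a strictly larger value, a contradiction.

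The only delicate step is passing from a pointwise strict inequality to a strict inequality in expectation. This requires the difference of the two sides to be integrable and nonnegative a.s., and strictly positive on a set of positive probability. The difference is integrable because $|\phi(X'b)|$ is dominated for every $b\in B$, including the convex combination, which lies in $B$. A nonnegative random variable that is strictly positive with positive probability has strictly positive expectation, which gives the result.
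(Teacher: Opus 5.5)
Your proof is correct and follows the paper's argument essentially step for step. Existence comes from continuity of $Q_\phi$ via dominated convergence with the envelope in Assumption \ref{as:moments} \eqref{as:moments:bounded}, followed by Weierstrass; uniqueness comes from strict concavity of $Q_\phi$, using the strict pointwise inequality on $\{X'(b_1-b_2)\neq 0\}$, which has positive probability by Assumption \ref{as:moments} \eqref{as:moments:index}, and you additionally spell out the integrability argument for passing from the pointwise strict inequality to the strict inequality in expectation, which the paper leaves implicit.
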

%%%%%%%%%%%%%%%%%%%%%%%%%%%%%%%%%%%%%%%%%

Proof of Lemma \ref{lm:existence_uniqueness} is provided in Appendix \ref{sec:lm:existence_uniqueness}.
The continuity of $\phi$ implied by Assumption \ref{as:surrogate} does not necessarily guarantee the continuity of $Q_\phi$. 
To ensure the continuity of $Q_\phi$, we use the bounded moment condition of Assumption \ref{as:moments} \eqref{as:moments:bounded} in addition so that score can be shown to be integrable to invoke the dominated convergence theorem.
Then, the standard argument via Weierstrass Theorem yields the existence result.
The uniqueness follows by the strict concavity of $Q_\phi$ on $B$, which holds under Assumption \ref{as:moments} \eqref{as:moments:convex}--\eqref{as:moments:index}.

%%%%%%%%%%%%%%%%%%%%%%%%%%%%%%%%%%%%%%%%%
\subsection{Summary of the Main Theoretical Results}
%%%%%%%%%%%%%%%%%%%%%%%%%%%%%%%%%%%%%%%%%

Putting \blue{Lemma} \ref{lm:parallel} and \ref{lm:existence_uniqueness} together, we obtain the following theorem.

%%%%%%%%%%%%%%%%%%%%%%%%%%%%%%%%%%%%%%%%%
\begin{thm}\label{thm:main}
If Assumptions \ref{as:bayes}, \ref{as:surrogate}, and \ref{as:moments} hold, then there exists a unique solution $b_\phi = \arg\max_{b \in B}Q_\phi(b)$.
Suppose that the following two conditions are satisfied in addition.
\begin{enumerate}[(T.\ref{thm:main}.1)]
\item[(T.\ref{thm:main}.1)] If $b_1,b_2 \in B$ are not parallel, then
$\Pr(\1\{X'b_1 \ge 0\} \neq \1\{X'b_2 \ge 0\}) > 0$.
\item[(T.\ref{thm:main}.2)] $\1\{X'b_\phi \ge 0\} = \1\{\eta(X) \ge 1/2\}$ a.s.
\end{enumerate}
Then, there exists a scalar $c>0$ such that $b_\phi = cb_0$, and $b_\phi \in \arg\max_{b \in B} Q_0(b)$.
\end{thm}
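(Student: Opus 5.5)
The theorem combines Lemma \ref{lm:existence_uniqueness} and Lemma \ref{lm:parallel}, so the plan is to verify each lemma's hypotheses in turn.

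\emph{Existence and uniqueness.} Assumptions \ref{as:bayes}, \ref{as:surrogate}, and \ref{as:moments} \eqref{as:moments:bounded} are in force, so the first part of Lemma \ref{lm:existence_uniqueness} gives that $\arg\max_{b\in B}Q_\phi(b)$ is nonempty. Assumption \ref{as:moments} \eqref{as:moments:convex}--\eqref{as:moments:index} also holds, so the second part gives that this set is a singleton. This justifies writing $b_\phi=\arg\max_{b\in B}Q_\phi(b)$, which settles the first sentence of the theorem.

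\emph{Reduction to Lemma \ref{lm:parallel}.} Condition (T.\ref{thm:main}.1) is word for word (L.\ref{lm:parallel}.1). Condition (L.\ref{lm:parallel}.2) requires that every maximizer satisfy $\1\{X'b\ge 0\}=\1\{\eta(X)\ge 1/2\}$ a.s. Because the argmax is the singleton $\{b_\phi\}$, this is exactly (T.\ref{thm:main}.2). This is the one point where uniqueness is genuinely needed: (T.\ref{thm:main}.2) is stated only for the specific $b_\phi$, so without uniqueness it would not imply the universal statement in (L.\ref{lm:parallel}.2). Lemma \ref{lm:parallel} then yields $b_\phi=cb_0$ for some $c>0$ and $b_\phi\in\arg\max_{b\in B}Q_0(b)$.

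\emph{Possible obstacle.} The proof is essentially bookkeeping, so the only thing to check is that the hypotheses line up. If the appendix proof of Lemma \ref{lm:parallel} were unavailable, the step needing care would be $c>0$. I would argue it directly as follows.
\begin{itemize}
\item Since $b_\phi=cb_0$ with $c\neq0$, suppose for contradiction that $c<0$.
\item Then on the event $\{X'b_0>0\}$, which has positive probability by Assumption \ref{as:bayes} \eqref{as:bayes:nondegeneracy}, we have $X'b_\phi<0$.
\item By (T.\ref{thm:main}.2) and Assumption \ref{as:bayes} \eqref{as:bayes:correct}, a.s. $\1\{X'b_\phi\ge0\}=\1\{X'b_0\ge0\}$, which equals $1$ on that event. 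This contradicts $X'b_\phi<0$ there.
\end{itemize}
Optimality for $Q_0$ then follows because $\1\{X'b_\phi\ge0\}$ a.s. equals the Bayes classifier $\1\{\eta(X)\ge1/2\}$. By conditioning on $X$, this classifier pointwise maximizes $\eta(X)\1\{X'b\ge0\}+(1-\eta(X))\1\{X'b<0\}$, so $Q_0(b_\phi)\ge Q_0(b)$ for all $b\in B$.
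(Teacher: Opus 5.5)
Your proposal is correct and matches the paper, which obtains the theorem exactly by combining Lemma \ref{lm:existence_uniqueness} (existence and uniqueness) with Lemma \ref{lm:parallel}; as you note, uniqueness is what turns the single-point condition (T.\ref{thm:main}.2) into the every-maximizer condition (L.\ref{lm:parallel}.2). Your fallback arguments for $c>0$ and for $Q_0$-optimality are slightly more direct versions of the appendix ones. The analogous check on the event $\{X'b_0<0\}$, which has positive probability by Assumption \ref{as:bayes}, also rules out the degenerate case $b_\phi=0$; both you and the paper pass over that case when asserting $c\neq 0$.
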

%%%%%%%%%%%%%%%%%%%%%%%%%%%%%%%%%%%%%%%%%

This theorem states that the surrogate maximum score can uniquely characterize a solution to the original maximum score under the two key conditions, (T.\ref{thm:main}.1) and (T.\ref{thm:main}.2).
Provided that these two conditions are satisfied, the true representative parameter $b_0$ can  be identified by the unique surrogate solution $b_\phi$ to $\max_{b \in B}Q_\phi(b)$ up to scale $c>0$.

%%%%%%%%%%%%%%%%%%%%%%%%%%%%%%%%%%%%%%%%%
\section{Primitive Sufficient Conditions and Examples}\label{sec:primitive}
%%%%%%%%%%%%%%%%%%%%%%%%%%%%%%%%%%%%%%%%%

The two key conditions, (T.\ref{thm:main}.1) and (T.\ref{thm:main}.2), stated in Theorem \ref{thm:main} are nontrivial.
Yet, they are satisfied by a wide family of distributions of $X$.
This section investigates primitive sufficient conditions for these high-level conditions, and list some families of distributions $X$ that satisfy these conditions.

For ease of exposition, we focus on the case with no shifting, where the model involves no intercept and the center of the distribution of $X$ is zero.
However, we emphasize that these locational restrictions can be relaxed at the expense of more sophisticated writings.

%%%%%%%%%%%%%%%%%%%%%%%%%%%%%%%%%%%%%%%%%
\subsection{Sufficient Condition for Condition (T.\ref{thm:main}.1)}\label{sec:sufficient1}
%%%%%%%%%%%%%%%%%%%%%%%%%%%%%%%%%%%%%%%%%

We first establish a sufficient condition for Condition (T.\ref{thm:main}.1).
To this end, we introduce some notation.
Let $\lambda_d$ denote the Lebesgue measure on $\mathbb{R}^d$, and let $B_r(x) = \{\xi \in \mathbb{R}^d : \|\xi-x\| < r\}$ denote the open ball of radius $r$ around $x \in \mathbb{R}^d$.
%%%%%%%%%%%%%%%%%%%%%%%%%%%%%%%%%%%%%%%%%
\begin{as}[Local Full Support]\label{as:local_full_support}
There exists $r>0$ such that 
$\lambda_d(A)>0$ implies $\Pr(X \in A) > 0$
for every Borel set $A \subseteq B_r(0)$.
\end{as}
%%%%%%%%%%%%%%%%%%%%%%%%%%%%%%%%%%%%%%%%%
Assumption \ref{as:local_full_support} requires that the distribution of $X$ places positive probability
on every subset of some open neighborhood of the origin that has positive
Lebesgue measure.
This assumption can be satisfied, for example, if $X$ has a probability density function $f_X$ satisfying $f_X(x)>0$ for almost every $x \in B_r(0)$.

The following proposition shows that Assumption \ref{as:local_full_support} is sufficient for Condition (T.\ref{thm:main}.1).
%%%%%%%%%%%%%%%%%%%%%%%%%%%%%%%%%%%%%%%%%
\begin{pp}[Sufficient Condition for Condition (T.\ref{thm:main}.1)]\label{pp:nonzero_probability}
Suppose that Assumption \ref{as:local_full_support} holds.
Then, Condition (T.\ref{thm:main}.1) in the statement of Theorem \ref{thm:main} is satisfied.
\end{pp}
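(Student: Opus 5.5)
The plan is to reduce Condition (T.\ref{thm:main}.1) to a statement about Lebesgue measure and then invoke Assumption \ref{as:local_full_support}. Fix $b_1,b_2 \in B$ that are not parallel and define the disagreement set
$$
D = \{x \in \mathbb{R}^d : \1\{x'b_1 \ge 0\} \neq \1\{x'b_2 \ge 0\}\}.
$$
Then $\Pr(\1\{X'b_1 \ge 0\} \neq \1\{X'b_2 \ge 0\}) = \Pr(X \in D) \ge \Pr(X \in D \cap B_r(0))$. By Assumption \ref{as:local_full_support}, it therefore suffices to show that $\lambda_d(D \cap B_r(0))>0$. The set $D$ is Borel because it is built from two closed half-spaces.

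The key construction is an open cone inside $D$. Interpret ``not parallel'' as linear independence of $b_1$ and $b_2$, so both are nonzero and neither is a scalar multiple of the other; a zero vector counts as parallel to everything. The $2\times d$ matrix with rows $b_1'$ and $b_2'$ then has rank $2$. Hence the linear system $x'b_1 = 1$, $x'b_2 = -1$ has a solution $x^\ast$. Consider
$$
U = \{x \in \mathbb{R}^d : x'b_1 > 0,\ x'b_2 < 0\}.
$$
This set is open as an intersection of two open half-spaces, it is a cone, and it contains $x^\ast$. On $U$ we have $\1\{x'b_1\ge 0\}=1$ and $\1\{x'b_2 \ge 0\}=0$, so $U \subseteq D$. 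Since $U$ is a cone, $t x^\ast \in U$ for every $t>0$. Choosing $t$ small enough that $\|t x^\ast\| < r$ gives a point of $U \cap B_r(0)$. This intersection is a nonempty open set, so it contains a small ball and has positive Lebesgue measure. Therefore $\lambda_d(D \cap B_r(0)) \ge \lambda_d(U\cap B_r(0)) > 0$, and Assumption \ref{as:local_full_support} yields $\Pr(X\in D)>0$, which is Condition (T.\ref{thm:main}.1).

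There is no real analytic obstacle; the main point of care is the meaning of ``not parallel.'' Under linear independence, the rank-two argument above is exactly what is needed. If the paper intended ``parallel'' to mean only positive multiples, the remaining cases also follow from Assumption \ref{as:local_full_support}. When $b_2 = c b_1$ with $c<0$ and $b_1 \neq 0$, the open half-space $\{x'b_1>0\}$ lies in $D$ and meets $B_r(0)$ in a set of positive measure. When exactly one of the vectors is zero, say $b_1=0$ and $b_2 \neq 0$, the set $D$ contains $\{x'b_2<0\}$, which again meets $B_r(0)$ in a set of positive measure. I would mention these cases briefly for completeness.
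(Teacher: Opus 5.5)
Your proof is correct and follows essentially the same route as the paper's: exhibit a nonempty open cone such as $\{x : x'b_1>0,\ x'b_2<0\}$ inside the disagreement set $D$, move a point of it into $B_r(0)$ by scaling so that $D\cap B_r(0)$ contains a nonempty open set, and invoke Assumption \ref{as:local_full_support}. The differences are minor improvements. Your rank-two argument replaces the paper's Gram--Schmidt construction and its two sign cases. Intersecting the open cone with $B_r(0)$ makes the positive-measure step rigorous, whereas the paper claims that a rescaled, unbounded cone fits inside the ball. Your extra cases also cover a zero vector, which the paper's division by $\|b_2\|^2$ implicitly excludes.
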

%%%%%%%%%%%%%%%%%%%%%%%%%%%%%%%%%%%%%%%%%

Proof is provided in Appendix \ref{sec:pp:nonzero_probability}.
The core of its proof shows that there exists a nonempty open set $A$ that is contained in both the set
$$
D = \{x \in \mathbb{R}^d : \1\{x'b_1 \ge 0\} \neq \1\{x'b_2 \ge 0\}\}
$$
and $B_r(0)$.
Hence, $\Pr(D)>0$ results by Assumption \ref{as:local_full_support}, and Condition (T.\ref{thm:main}.1) is satisfied consequently.

%%%%%%%%%%%%%%%%%%%%%%%%%%%%%%%%%%%%%%%%%
\subsection{Sufficient Conditions for Condition (T.\ref{thm:main}.2)}\label{sec:sufficient2}
%%%%%%%%%%%%%%%%%%%%%%%%%%%%%%%%%%%%%%%%%

We next establish sufficient conditions for Condition (T.\ref{thm:main}.2).
These conditions are the single-index assumptions stated below.
%%%%%%%%%%%%%%%%%%%%%%%%%%%%%%%%%%%%%%%%%
\begin{as}[Single Index]\label{as:index}
Denote $T = X'b_0$. The following conditions hold.
\begin{enumerate}[(i)]
\item\label{as:index:bounded_moment} $\E[|T|] < \infty$.
\item\label{as:index:single_index} There exists a measurable function $h: \mathbb{R} \to [0,1]$ such that $\eta(X) = h(T)$, $h(t)>1/2$ for all $t>0$, and $h(t)<1/2$ for all $t<0$.
%$h$ is strictly increasing, and $h(0) = 1/2$.
\item\label{as:index:linear} For every $b \in \mathbb{R}^d$, there exists $a_b \in \mathbb{R}$ such that $a_b b_0 \in B$ and $\E[X'b|T] = a_b T$ a.s. 
\end{enumerate}
\end{as}
%%%%%%%%%%%%%%%%%%%%%%%%%%%%%%%%%%%%%%%%%
Assumption \ref{as:index} \eqref{as:index:bounded_moment} requires that the single index $T$ has a bounded moment, which is a quite mild assumption.
A sufficient condition is $\E[\|X\|]<\infty$.
Assumption \ref{as:index} \eqref{as:index:single_index} requires that the conditional choice probability is strictly increasing as a function of the single index $X'b_0$.
This index assumption is also assumed in the literature \citep[e.g.,][]{klein1993efficient}.
In particular, it is satisfied under the threshold-crossing model \eqref{eq:threshold_crossing} with the conditional median restriction \eqref{eq:conditional_median} provided that $F_{\varepsilon|X}$ is strictly increasing at 0.
Assumption \ref{as:index} \eqref{as:index:linear} requires that $X'b$ linearly project on $T$.
Unlike the previous two parts of the assumption, this part is a nontrivial assumption.
With this said, this nontrivial part can be satisfied by a wide class of distributions of $X$, as formally explored in Section \ref{sec:examples_distribution} with examples.

The following theorem shows that Assumption \ref{as:index} serves as a primitive sufficient condition for the high-level condition (T.\ref{thm:main}.2) in the statement of Theorem \ref{lm:parallel}.

%%%%%%%%%%%%%%%%%%%%%%%%%%%%%%%%%%%%%%%%%
\begin{pp}[Sufficient Condition for Condition (T.\ref{thm:main}.2)]\label{pp:surrogate}
Suppose that Assumptions \ref{as:bayes}, \ref{as:surrogate}, \ref{as:moments}, and \ref{as:index} hold.
Then, Condition (T.\ref{thm:main}.2) in the statement of Theorem \ref{thm:main} is satisfied.
\end{pp}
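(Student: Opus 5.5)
By Theorem \ref{thm:main} (via Lemma \ref{lm:existence_uniqueness}), $Q_\phi$ has a unique maximizer $b_\phi$ on $B$. The plan is to show that $b_\phi=c\,b_0$ for some $c>0$. Once this holds, $\1\{X'b_\phi\ge 0\}=\1\{X'b_0\ge0\}=\1\{\eta(X)\ge 1/2\}$ a.s. by Assumption \ref{as:bayes}\eqref{as:bayes:correct}, which is exactly (T.\ref{thm:main}.2). The argument has two steps: first reduce the problem to the ray spanned by $b_0$ using a conditional Jensen inequality, and then show the optimum on that ray has positive scale.

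\textbf{Step 1 (projection onto the index).} Fix $b\in B$ and let $a_b$ be as in Assumption \ref{as:index}\eqref{as:index:linear}, so that $a_bb_0\in B$ and $\E[X'b\mid T]=a_bT$. Since $\eta(X)=h(T)$, condition on $T$ and write
\[
Q_\phi(b)=\E\big[h(T)\,\E[\phi(X'b)\mid T]+(1-h(T))\,\E[\phi(-X'b)\mid T]\big].
\]
Integrability holds by Assumption \ref{as:moments}\eqref{as:moments:bounded}, and $\E|X'b|<\infty$ holds by concavity of $\phi$ together with that moment bound; if needed I would add $\E\|X\|<\infty$ here. Conditional Jensen for concave $\phi$ gives $\E[\phi(\pm X'b)\mid T]\le\phi(\pm a_bT)$. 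Hence $Q_\phi(b)\le Q_\phi(a_bb_0)$.

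Now suppose $b_\phi$ is not of the form $a b_0$. The point $a_{b_\phi}b_0$ lies in $B$ and is distinct from $b_\phi$, yet satisfies $Q_\phi(a_{b_\phi}b_0)\ge Q_\phi(b_\phi)$. This contradicts uniqueness. Therefore $b_\phi=a^*b_0$ for some scalar $a^*$, and $a^*$ maximizes $g(a):=Q_\phi(ab_0)=\E[h(T)\phi(aT)+(1-h(T))\phi(-aT)]$ over the interval $\{a: ab_0\in B\}=[-R/\|b_0\|,R/\|b_0\|]$. This interval contains $0$ in its interior because $b_0\neq0$, which follows from Assumption \ref{as:bayes}\eqref{as:bayes:nondegeneracy}.

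\textbf{Step 2 (positive scale; main obstacle).} It remains to show $a^*>0$. Both $a^*=0$ and $a^*<0$ must be excluded. I would use concavity of $g$ and compute its right derivative at $0$ by dominated convergence, using concavity of $\phi$ to bound difference quotients and $\E|T|<\infty$ from Assumption \ref{as:index}\eqref{as:index:bounded_moment}. This gives
\[
g'(0^+)=\phi'(0)\,\E\big[(2h(T)-1)T\big].
\]
By Assumption \ref{as:index}\eqref{as:index:single_index}, $(2h(T)-1)T>0$ whenever $T\neq0$, and by Assumption \ref{as:bayes}\eqref{as:bayes:nondegeneracy}, $\Pr(T\neq0)>0$. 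Together with $\phi'(0)>0$, this yields $g'(0^+)>0$, so $0$ is not a maximizer.

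To exclude $a<0$, I would compare $g(a)$ with $g(-a)$ directly:
\[
g(-a)-g(a)=\E\big[(2h(T)-1)\{\phi(|a|T)-\phi(-|a|T)\}\big]>0,
\]
since $\phi$ is strictly increasing, so the two factors in the integrand share a sign, and the product is strictly positive on $\{T\ne0\}$. Since the interval is symmetric, $-a$ is feasible, so no $a<0$ is optimal. Thus $a^*>0$, and I set $c=a^*$. The delicate points are justifying the interchange of derivative and expectation at $0$ (handled by monotone difference quotients of a concave function) and the integrability needed for conditional Jensen. These are the steps I expect to need the most care.
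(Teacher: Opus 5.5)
Your proposal is correct and follows essentially the same route as the paper. Conditional Jensen under Assumption \ref{as:index}\eqref{as:index:linear}, combined with uniqueness of $b_\phi$, forces $b_\phi=a_{b_\phi}b_0$. Then $a_{b_\phi}<0$ is ruled out by comparing $Q_\phi(|a|b_0)$ with $Q_\phi(ab_0)$ using the symmetry of $B$, and $a_{b_\phi}=0$ is ruled out by the sign of the derivative $\phi'(0)\E[T(2h(T)-1)]>0$. Your extra integrability remarks fill in details the paper leaves implicit: that $\E|X'b|<\infty$ follows from concavity, $\phi'(0)>0$ and Assumption \ref{as:moments}\eqref{as:moments:bounded}, and the monotone difference-quotient justification for differentiating at $0$.
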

%%%%%%%%%%%%%%%%%%%%%%%%%%%%%%%%%%%%%%%%%

Proof is provided in Appendix \ref{sec:pp:surrogate}.
Key to the proof is Assumption \ref{as:index} \eqref{as:index:linear}, which allows for conditional Jensen's inequality to derive
$
Q_\phi(b) \le Q_\phi(a_b b_0)
$
for every $b \in B$.
As already mentioned, this key assumption is nontrivial.
The following subsection discusses examples of distributions satisfying Assumption \ref{as:index} \eqref{as:index:linear}, as well as Assumptions \ref{as:local_full_support} and \ref{as:index} \eqref{as:index:bounded_moment}.

%%%%%%%%%%%%%%%%%%%%%%%%%%%%%%%%%%%%%%%%%
\subsection{Examples of Distribution Families Satisfying the Conditions}\label{sec:examples_distribution}
%%%%%%%%%%%%%%%%%%%%%%%%%%%%%%%%%%%%%%%%%

This section discusses specific distribution families of $X$ that satisfy Assumptions \ref{as:local_full_support}, \ref{as:index} \eqref{as:index:bounded_moment}, and \ref{as:index} \eqref{as:index:linear}.
We introduce the following definition of elliptically symmetric distributions.
%%%%%%%%%%%%%%%%%%%%%%%%%%%%%%%%%%%%%%%%%
\begin{df}
A random vector $X$ is elliptically distributed if its probability density function $f_X$ takes the form
$$
f(x) = |\Sigma|^{-1/2} g\left((x-\mu)' \Sigma^{-1} (x-\mu)\right).
$$
We write $X \sim ES(\mu,\Sigma,g)$ if $X$ has this form of $f_X$.
\end{df}
%%%%%%%%%%%%%%%%%%%%%%%%%%%%%%%%%%%%%%%%%

If $X \sim ES(0,\Sigma,g)$, then we can derive
$$
\E\left[ X | b_0'X = t \right]
=
\frac{\Sigma b_0}{b_0' \Sigma b_0} t,
$$
hence satisfying Assumption \ref{as:index} \eqref{as:index:linear}.

Examples of elliptically symmetric distributions include the multivariate normal, multivariate $t$, and multivariate Laplace distributions, among others. More generally, Gaussian scale mixtures of the form
$
X = \sqrt{S}\,Z,
$
where $Z \sim N(0,\Sigma)$ and $S>0$ is independent of $Z$, also possess the elliptical symmetry property. This class is very wide, but includes, for example, the variance--gamma, generalized hyperbolic, normal--inverse--Gaussian, and slash distributions.
Furthermore, all these families of distributions satisfy Assumption \ref{as:index} \eqref{as:index:bounded_moment} under Assumption \ref{as:bayes}.
All these families of distribution also satisfy Assumption \ref{as:local_full_support} as well.
Hence, these distributions serve as examples that satisfy the high-level conditions, (T.\ref{thm:main}.1) and (T.\ref{thm:main}.2), stated in Theorem \ref{thm:main} under the threshold-crossing model \eqref{eq:threshold_crossing} with the conditional median restriction \eqref{eq:conditional_median}.

Finally, we emphasize that these distribution families serve only as illustrative examples satisfying the sufficient conditions. 
The sufficient conditions themselves are far from necessary, suggesting that a substantially broader class of distributions, including nonparametric families, may also satisfy them.

%%%%%%%%%%%%%%%%%%%%%%%%%%%%%%%%%%%%%%%%%
\subsection{Discrete $X$}
%%%%%%%%%%%%%%%%%%%%%%%%%%%%%%%%%%%%%%%%%

Section \ref{sec:primitive} has thus far focused on primitive sufficient conditions that entail continuous distributions, in order to facilitate clear discussions of commonly used distribution families such as the multivariate normal, multivariate $t$, and multivariate Laplace distributions, among others, as introduced in Section \ref{sec:examples_distribution} as specific examples satisfying the high-level conditions. 
However, these sufficient conditions are far from necessary. 

In particular, the high-level conditions (T.\ref{thm:main}.1)--(T.\ref{thm:main}.2) in Theorem \ref{thm:main} do not rule out discretely supported $X$, although, unlike the continuous case, there are no canonical stylized examples for multivariate discrete variables.
Specifically, the high-level Condition (T.\ref{thm:main}.1) can hold for discrete $X$ if the support of $X$ is sufficiently rich such that, for every pair of nonparallel $b_1, b_2 \in B$, there exists at least one support point $x$ with positive probability mass satisfying the inequality $\1\{x'b_1 \ge 0\} \neq \1\{x'b_2 \ge 0\}$. 
Furthermore, our primitive sufficient condition of the single-index structure for the high-level Condition (T.\ref{thm:main}.2) is not inherently tied to continuity of $X$.

%%%%%%%%%%%%%%%%%%%%%%%%%%%%%%%%%%%%%%%%%
\section{Surrogate Maximum Score Estimation \& Standard Inference}\label{sec:estimation}
%%%%%%%%%%%%%%%%%%%%%%%%%%%%%%%%%%%%%%%%%

%%%%%%%%%%%%%%%%%%%%%%%%%%%%%%%%%%%%%%%%%
\subsection{Estimator}
%%%%%%%%%%%%%%%%%%%%%%%%%%%%%%%%%%%%%%%%%

The conventional maximum score estimator is based on the sample counterpart of 
$\max_{b \in B} Q_0(b)$, where $Q_0$ is defined in \eqref{eq:q0}. 
Because of the presence of the indicator function, this method presents both practical and theoretical challenges. 
In particular, the resulting optimization problem is nonconvex, and the estimator converges at a slower-than-root-$n$ rate to a nonstandard limiting distribution.

Under the conditions discussed in Sections \ref{sec:main} and \ref{sec:primitive}, we argued that the surrogate objective $Q_\phi$ defined in \eqref{eq:q_phi} can be used in place of $Q_0$. 
Motivated by this observation, we propose the surrogate maximum score estimator
\begin{align}
&\widehat b = \arg\max_{b\in B}Q_{\phi, n}(b),
\qquad\text{where}\notag\\
&Q_{\phi, n}(b) = \mathbb{E}_n \big[ Y \cdot \phi(X'b) + (1-Y) \cdot \phi(-X'b) \big],
\label{eq:surrogate_sample_objective}
\end{align}
with $\mathbb{E}_n$ denoting the sample mean operator. 
The equality ``$=$'' in this definition characterizes $\widehat b$ uniquely, as the solution is shown to be unique in Section \ref{sec:existence_uniqueness}.

The surrogate score function $\phi$ is well-behaved, cf.~Assumption \ref{as:surrogate}.
Therefore, unlike the conventional maximum score estimator, the surrogate method in \eqref{eq:surrogate_sample_objective} admits a convex optimization problem, a unique solution, and a root-$n$ convergence rate to a standard limiting distribution. 
Although the underlying theoretical arguments are largely standard, the following subsection presents the asymptotic properties of the estimator \eqref{eq:surrogate_sample_objective} for the sake of completeness.

%%%%%%%%%%%%%%%%%%%%%%%%%%%%%%%%%%%%%%%%%
\subsection{Asymptotic Properties}\label{sec:asymptotic}
%%%%%%%%%%%%%%%%%%%%%%%%%%%%%%%%%%%%%%%%%

The uniqueness of the solution $b_\phi$, as established in Theorem \ref{thm:main}, together with the local curvature of the population objective, as described below, provides the key ingredients for deriving its statistical properties.
For convenience of writing, we write
$\ell_\phi(Z_i, b):=Y_i \phi(X_i^{\prime} b)+(1-Y_i) \phi(-X_i^{\prime} b)$
and
$\psi(Z_i, b):=\nabla_b \ell_\phi(Z, b)$.

%%%%%%%%%%%%%%%%%%%%%%%%%%%%%%%%%%%%%%%%%
\begin{as}\label{as:asymptotic}${}$
\begin{enumerate}[(i)]
\item\label{as:asymptotic1} $Z_i:=(Y_i, X^\prime_i)^\prime$ are i.i.d. and the support of $X_i$ is a set $\mathcal{X} \subset \mathbb{R}^d$.

\item \label{as:asymptotic0} The parameter $b_\phi$ belongs to a compact set $B \subset \mathbb{R}^d$;

%\item \label{as:asymptotic2} $\mathbb{E}\Vert X_i\Vert ^{4}<\infty$.

\item \label{as:asymptotic3} $\ell_\phi$ is twice continuously differentiable;
$H:=\nabla^2_b Q_\phi(b_\phi)$ is negative definite; and
$\Omega:=\mathbb{E}[\psi(Z_i, b_\phi) \psi(Z_i, b_\phi)^{\prime}]$ is finite and positive definite.

\item \label{as:asymptotic4}
$\mathbb{E}[F_0(Z)^2]<\infty$, $\mathbb{E}[F_1(Z)^2]<\infty$, and $\mathbb{E}[F_2(Z)^2]<\infty$, where
$F_0(Z):=\sup _{b \in B}\vert\ell_\phi(Z, b)\vert$,
$F_1(Z):=\sup _{b \in B}\Vert\nabla_b \ell_\phi(Z, b)\Vert$, and 
$F_2(Z):=\sup _{b \in B}\Vert\nabla_b^2 \ell_\phi(Z, b)\Vert$.
\end{enumerate}
\end{as}
%%%%%%%%%%%%%%%%%%%%%%%%%%%%%%%%%%%%%%%%%

Assumption \ref{as:asymptotic} \eqref{as:asymptotic1} imposes an i.i.d.\ sampling framework. 
This assumption is not essential and can be extended to accommodate time-series or spatial dependence.
Assumption \ref{as:asymptotic} \eqref{as:asymptotic0} requires the compactness of the parameter set.
Assumption \ref{as:asymptotic} \eqref{as:asymptotic3} is compatible with the smooth surrogate function $\phi$ and the uniqueness of the maximizer as guaranteed by Lemma \ref{lm:existence_uniqueness}. In particular, it can be satisfied by the specific loss functions introduced as examples in Section \ref{sec:existence_uniqueness}.

The envelope conditions in Assumption \ref{as:asymptotic} \eqref{as:asymptotic4} are stated at a high level to maintain generality. Lower-level sufficient conditions are provided in Section \ref{sec:as:asymptotic4}. In particular, we show that the conventional bounded fourth-moment condition $\E\|X\|^4 < \infty$ is sufficient for Assumption \ref{as:asymptotic} \eqref{as:asymptotic4} in the context of the specific loss functions presented as examples in Section \ref{sec:existence_uniqueness}.

The following corollary formalizes the root-$n$ consistency and asymptotic normality of the surrogate maximum score estimator $\widehat{b}$.

%%%%%%%%%%%%%%%%%%%%%%%%%%%%%%%%%%%%%%%%%
\begin{cl}[Root-$n$ Asymptotic Normality]\label{cl:root-n}
If Assumption \ref{as:asymptotic} holds in addition to the conditions stated in Theorem \ref{thm:main}, then
\begin{align*}
\widehat{b} \rightarrow_{p} b_\phi
~~\text{and}~~
\sqrt{n}(\widehat{b}-b_\phi) \rightarrow_{d} \mathcal{N}(0,~H^{-1} \Omega H^{-1}) .
\end{align*}
\end{cl}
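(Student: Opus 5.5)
The plan is the standard two-step M-estimation argument: consistency via uniform convergence plus identification, then asymptotic normality via a mean-value expansion of the first-order condition. Theorem \ref{thm:main} supplies the identification input. Under Assumptions \ref{as:bayes}, \ref{as:surrogate}, and \ref{as:moments}, $b_\phi$ is the unique maximizer of $Q_\phi$ on the compact set $B$. Moreover, $Q_\phi$ is continuous by the dominated convergence argument underlying Lemma \ref{lm:existence_uniqueness}.

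For consistency, I first establish $\sup_{b\in B}|Q_{\phi,n}(b)-Q_\phi(b)|\to_p 0$. The class $\{\ell_\phi(\cdot,b):b\in B\}$ consists of functions continuous in $b$ (Assumption \ref{as:asymptotic} \eqref{as:asymptotic3}). It has integrable envelope $F_0$ by Assumption \ref{as:asymptotic} \eqref{as:asymptotic4}, and $B$ is compact. A uniform law of large numbers therefore applies, for instance Newey--McFadden Lemma 2.4, or equivalently a Glivenko--Cantelli bracketing argument via the Lipschitz bound $|\ell_\phi(Z,b)-\ell_\phi(Z,b')|\le F_1(Z)\|b-b'\|$. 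Then compactness, continuity, and uniqueness make $b_\phi$ well separated: for each $\epsilon>0$, $\sup_{\|b-b_\phi\|\ge\epsilon}Q_\phi(b)<Q_\phi(b_\phi)$. The standard argmax consistency theorem then gives $\widehat b\to_p b_\phi$.

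For normality, I have to deal with the fact that $b_\phi=cb_0$ might lie on the boundary $\|b\|=R$ of $B$. I would argue that $b_\phi$ is interior, which is implicitly required since $H$ is assumed to be the Hessian at a maximum with the first-order condition holding. If needed, I would state interiority as part of reading Assumption \ref{as:asymptotic} \eqref{as:asymptotic0}. With $b_\phi$ interior, the consistency step gives $\widehat b$ interior with probability approaching one, so $\mathbb{E}_n\psi(Z,\widehat b)=0$. A coordinatewise mean-value expansion yields
\begin{align*}
0=\mathbb{E}_n\psi(Z,b_\phi)+\widehat H_n(\widehat b-b_\phi),
\end{align*}
where $\widehat H_n$ has rows $\mathbb{E}_n\nabla_b^2\ell_\phi(Z,\bar b_j)$ for intermediate points $\bar b_j$. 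The uniform law of large numbers applied to $\nabla_b^2\ell_\phi$ uses envelope $F_2$ and continuity. Combined with $\bar b_j\to_p b_\phi$, this gives $\widehat H_n\to_p \E[\nabla^2_b\ell_\phi(Z,b_\phi)]=H$. The interchange of derivative and expectation is justified by the envelope $F_2$ and dominated convergence; the same justification also gives $\E\psi(Z,b_\phi)=\nabla Q_\phi(b_\phi)=0$. Since $H$ is negative definite, $\widehat H_n$ is invertible with probability approaching one. By the i.i.d. sampling and finite $\Omega$, the Lindeberg--Lévy CLT gives $\sqrt n\,\mathbb{E}_n\psi(Z,b_\phi)\to_d N(0,\Omega)$. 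Slutsky's lemma then delivers $\sqrt n(\widehat b-b_\phi)\to_d N(0,H^{-1}\Omega H^{-1})$.

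The main obstacle is the interiority of $b_\phi$. Everything else is routine given the envelope conditions. My first attempt would show that the unconstrained maximizer of $Q_\phi$ over $\mathbb{R}^d$ exists and is finite. This follows from strict concavity together with Assumption \ref{as:moments} \eqref{as:moments:index}, since $Q_\phi(tv)\to-\infty$ or at least strictly decreases along rays when $\phi$ is negative-loss type and nondegenerate. The constraint is then slack once $R$ exceeds its norm. If that route is too delicate, I would add interiority as an explicit maintained condition, which is in any case what the first-order condition with a negative definite $H$ presupposes.
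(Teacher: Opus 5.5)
Your argument is the paper's argument: a uniform LLN with envelope $F_0$ plus argmax consistency, then a coordinatewise mean-value expansion of the first-order condition, a CLT for $\mathbb{E}_n\psi(Z,b_\phi)$, a uniform LLN for the Hessian with envelope $F_2$, and Slutsky. Your consistency step is slightly cleaner than the paper's, because it uses global uniqueness from Theorem \ref{thm:main}. The paper instead invokes the local bound $Q_\phi(b)\le Q_\phi(b_\phi)-c\|b-b_\phi\|^2$, which already presupposes $\nabla_b Q_\phi(b_\phi)=0$. You are also right that the paper takes interiority for granted when it asserts $\E[\psi(Z_i,b_\phi)]=\nabla_b Q_\phi(b_\phi)=0$ ``by the definition of $b_\phi$''.

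Your first route to interiority does not work, however. Strict concavity plus Assumption \ref{as:moments} \eqref{as:moments:index} does not make $Q_\phi$ decrease along rays. Likewise, negative definiteness of $H$ does not force the first-order condition. Take the noise-free design $Y=\1\{X'b_0\ge 0\}$ with $X\sim N(0,I_d)$ and the logistic surrogate.
\begin{itemize}
\item Along the ray through $b_0$, $Q_\phi(tb_0)=\E[\phi(t|T|)]$ is strictly increasing in $t$, and its supremum is not attained.
\item Hence $b_\phi=Rb_0/\|b_0\|$ lies on the sphere $\|b\|=R$, with $b_0'\nabla_b Q_\phi(b_\phi)=\E[\phi'(R|T|/\|b_0\|)\,|T|]>0$.
\item Nevertheless, all conditions of Theorem \ref{thm:main} and Assumption \ref{as:asymptotic} hold, including $H$ negative definite and $\Omega$ positive definite.
\item The sample is perfectly separated, so $\widehat b$ also lies on the sphere. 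Then $\sqrt n(\widehat b-b_\phi)$ is asymptotically degenerate in the direction $b_0$, and the stated normal limit fails.
\end{itemize}
Even without separation, the unconstrained maximizer can have norm exceeding the fixed $R$. Hence interiority (equivalently $\nabla_b Q_\phi(b_\phi)=0$ together with $\|b_\phi\|<R$) cannot be derived from the stated assumptions. Your fallback of maintaining it as an explicit condition is necessary, and it is exactly the condition the paper's proof uses tacitly.
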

%%%%%%%%%%%%%%%%%%%%%%%%%%%%%%%%%%%%%%%%%

Corollary \ref{cl:root-n} establishes that the surrogate maximum score estimator $\widehat b$ attains the standard root-$n$ rate of convergence and is asymptotically normal. 
This stands in contrast to the classical maximum score estimator, which exhibits a cube-root convergence rate and a non-Gaussian limiting distribution due to the discontinuity of its objective function. 

The key feature underlying this improvement is the smoothness of the surrogate loss $\phi(\cdot)$, which ensures that the population objective $Q_\phi(b)$ is twice continuously differentiable and admits a local quadratic expansion around its maximizer $b_\phi$. 
As a consequence, the asymptotic normality of $\widehat b$ permits the use of conventional inference procedures, including bootstrap methods, thereby facilitating standard Gaussian-based inference.

Section \ref{sec:simulation} demonstrates these theoretical properties—namely, root-$n$ consistency and asymptotic normality—through simulation studies.

%%%%%%%%%%%%%%%%%%%%%%%%%%%%%%%%%%%%%%%%%
\subsection{Primitive Sufficient Conditions for Assumption \ref{as:asymptotic} \eqref{as:asymptotic4}}\label{sec:as:asymptotic4}
%%%%%%%%%%%%%%%%%%%%%%%%%%%%%%%%%%%%%%%%%

The envelope conditions in Assumption \ref{as:asymptotic} \eqref{as:asymptotic4} are stated at a high level to preserve generality. In this section, we provide lower-level sufficient conditions tailored to the three loss functions introduced in Section \ref{sec:existence_uniqueness}.
Namely, it is enough to assume the conventional bounded fourth-moment condition $\E\|X\|^4 < \infty$.

Let the parameter space $B \subset \mathbb{R}^d$ be compact, and suppose that $\phi$ and its derivatives satisfy polynomial growth:
\begin{align*}
|\phi(u)| \lesssim 1 + |u|^p, 
\quad
|\phi'(u)| \lesssim 1 + |u|^q,
\quad \text{and} \quad
|\phi''(u)| \lesssim 1 + |u|^r
\end{align*}
for some constants $p, q, r \ge 0$.
Then, the envelope functions satisfy the bounds
\begin{align*}
F_0(Z) \lesssim 1 + \|X\|^p, 
\quad
F_1(Z) \lesssim \|X\| \bigl(1 + \|X\|^q\bigr),
\quad \text{and} \quad
F_2(Z) \lesssim \|X\|^2 \bigl(1 + \|X\|^r\bigr).
\end{align*}
In light of these bounds, we provide primitive sufficient conditions for Assumption \ref{as:asymptotic} \eqref{as:asymptotic4} for each of the loss functions considered below.

\begin{enumerate}[(1)]
\item \textbf{Logistic loss:} 
$\phi(u) = -\frac{1}{a}\log(1+e^{-au}), a>0$.
The polynomial growth conditions are satisfied with the exponents $p=q=r=0$, yielding
\begin{align*}
F_0(Z)\lesssim 1,
\quad 
F_1(Z)\lesssim \Vert X\Vert,
\quad\text{and}\quad
F_2(Z)\lesssim \Vert X\Vert^2.
\end{align*}
Hence, Assumption \ref{as:asymptotic} \eqref{as:asymptotic4} is satisfied if $\mathbb{E}\Vert X\Vert^4<\infty$.

\item \textbf{Pseudo-Huber loss:} 
$\phi(u) = -\sqrt{a^2 + u^2} + u$, $a>0$. 
The polynomial growth conditions are satisfied with the exponents $p=1$ and $q=r=0$, yielding
\begin{align*}
F_0(Z)\lesssim 1+\Vert X\Vert,
\quad
F_1(Z)\lesssim \Vert X\Vert,
\quad\text{and}\quad
F_2(Z)\lesssim \Vert X\Vert^2. 
\end{align*}
Hence, Assumption \ref{as:asymptotic} \eqref{as:asymptotic4} is satisfied if $\mathbb{E}\Vert X\Vert^4<\infty$.

\item \textbf{Probit loss:} 
$\phi(u)=-\log \Phi(au)$. 
The polynomial growth conditions are satisfied with the exponents $p=2$, $q=1$, and $r=0$, yielding
\begin{align*}
	F_0(Z)\lesssim 1+\Vert X\Vert^2,
	\quad
	F_1(Z)\lesssim \Vert X\Vert(1+\Vert X\Vert),
	\quad\text{and}\quad 
	F_2(Z)\lesssim \Vert X\Vert^2.
\end{align*}
Hence, Assumption \ref{as:asymptotic} \eqref{as:asymptotic4} is satisfied if $\mathbb{E}\Vert X\Vert^4<\infty$.

\end{enumerate}

All the three loss functions introduced in Section \ref{sec:existence_uniqueness} as examples satisfying the high-level conditions in Assumption \ref{as:surrogate} also satisfy the high-level condition in Assumption \ref{as:asymptotic} \eqref{as:asymptotic4}. %under the conventional bounded fourth-moment condition $\E\|X\|^4 < \infty$.

On the other hand, not all loss functions that satisfy Assumption \ref{as:surrogate} meet the requirement in Assumption \ref{as:asymptotic} \eqref{as:asymptotic4}. For instance, the exponential loss satisfies Assumption \ref{as:surrogate}, but it cannot be bounded by any polynomial and therefore fails to satisfy Assumption \ref{as:asymptotic} \eqref{as:asymptotic4}. This is the primary reason why it was not included as an example in Section \ref{sec:existence_uniqueness}. 

%%%%%%%%%%%%%%%%%%%%%%%%%%%%%%%%%%%%%%%%%
\subsection{Variance Estimation}\label{sec:variance}
%%%%%%%%%%%%%%%%%%%%%%%%%%%%%%%%%%%%%%%%%

In light of the root-$n$ asymptotic normality established in Corollary \ref{cl:root-n}, we can conduct standard inference by computing the analytic variance estimate to obtain standard errors, constructing confidence intervals, comparing $t$-statistics with normal critical values, and computing $p$-values. 

This feature of the surrogate maximum score method offers a substantial advantage: it can be readily implemented by empirical researchers familiar with standard asymptotics and is compatible with widely used statistical software packages, such as Stata, whose default output assumes a normal limiting distribution.

To this end, this subsection presents the estimation of the asymptotic variance $H^{-1} \Omega H^{-1}$ and establishes its consistency. 
This can be implemented via a straightforward plug-in approach. 
Let $\widehat{b}$ be the surrogate estimator of $b_\phi$, and define the sample Hessian and variance estimator as
\begin{align*}
\widehat{H} := \nabla_b^2 Q_n(\widehat{b}) 
\quad \text{and} \quad 
\widehat{\Omega} := \frac{1}{n} \sum_{i=1}^n \psi(Z_i, \widehat{b}) \, \psi(Z_i, \widehat{b})^{\prime},
\end{align*}
where $Q_n(b) = \frac{1}{n} \sum_{i=1}^n \ell_\phi(Z_i, b)$ and $\psi(Z_i, b) = \nabla_b \ell_\phi(Z_i, b)$. 
Then, a natural estimator of the asymptotic variance is given by
\begin{align}\label{eq:var2}
\widehat{V} := \widehat{H}^{-1} \widehat{\Omega} \widehat{H}^{-1}.
\end{align}
The following corollary establishes the consistency of this variance estimator.

%%%%%%%%%%%%%%%%%%%%%%%%%%%%%%%%%%%%%%%%%
\begin{cl}\label{cl:var}
Suppose that the conditions stated in Corollary \ref{cl:root-n} hold. Then,
\begin{align*}
\widehat{V} \rightarrow_{p} H^{-1} \Omega H^{-1}.
\end{align*}
\end{cl}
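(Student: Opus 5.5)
The plan is to prove consistency of $\widehat H$ and $\widehat\Omega$ separately and then apply the continuous mapping theorem. Corollary \ref{cl:root-n} already gives $\widehat b \to_p b_\phi$, and Assumption \ref{as:asymptotic} \eqref{as:asymptotic3} gives that $H$ is negative definite, hence invertible. Since matrix inversion and multiplication are continuous at $(H,\Omega)$, it suffices to show $\widehat H \to_p H$ and $\widehat\Omega \to_p \Omega$. Then $\widehat H$ is invertible with probability approaching one, and $\widehat V \to_p H^{-1}\Omega H^{-1}$.

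For $\widehat H$, decompose
\begin{align*}
\widehat H - H = \bigl[\nabla_b^2 Q_n(\widehat b) - \nabla_b^2 Q_\phi(\widehat b)\bigr] + \bigl[\nabla_b^2 Q_\phi(\widehat b) - \nabla_b^2 Q_\phi(b_\phi)\bigr].
\end{align*}
\begin{itemize}
\item \emph{First term.} We establish a uniform law of large numbers, $\sup_{b\in B}\|\mathbb{E}_n[\nabla_b^2\ell_\phi(Z,b)] - \E[\nabla_b^2\ell_\phi(Z,b)]\| \to_p 0$. It holds because $B$ is compact, $b\mapsto \nabla_b^2\ell_\phi(Z,b)$ is continuous for each $Z$ (twice continuous differentiability in Assumption \ref{as:asymptotic} \eqref{as:asymptotic3}), and the envelope $F_2$ is integrable ($\E F_2^2<\infty$). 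This is the standard Tauchen/Newey--McFadden Lemma 2.4 argument.
\item \emph{Interchange of derivative and expectation.} The envelopes $F_1,F_2$ together with dominated convergence give $\nabla_b^2 Q_\phi(b) = \E[\nabla_b^2\ell_\phi(Z,b)]$. The same step shows this map is continuous on $B$.
\item \emph{Second term.} It vanishes in probability by that continuity and $\widehat b\to_p b_\phi$.
\end{itemize}

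For $\widehat\Omega$, apply the same argument to $g(Z,b)=\psi(Z,b)\psi(Z,b)'$. This function is continuous in $b$ and dominated by $F_1(Z)^2$, which is integrable by Assumption \ref{as:asymptotic} \eqref{as:asymptotic4}. The uniform LLN then yields $\sup_b\|\mathbb{E}_n g(Z,b) - \E g(Z,b)\|\to_p 0$. Continuity of $b\mapsto \E g(Z,b)$ follows by dominated convergence, and combined with consistency of $\widehat b$ this gives $\widehat\Omega\to_p \E g(Z,b_\phi)=\Omega$.

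The only step needing care is the uniform LLN. The cleanest route is to cite the standard result (Newey and McFadden, 1994, Lemma 2.4), since the i.i.d. sampling, compactness, pointwise continuity and integrable envelope conditions are exactly those listed in Assumption \ref{as:asymptotic}. Everything else is routine.
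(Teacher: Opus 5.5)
Your proof is correct and has the same skeleton as the paper's: show $\widehat H\to_p H$ and $\widehat\Omega\to_p\Omega$, then apply the continuous mapping theorem using invertibility of $H$. The two pieces, however, are handled differently.

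For $\widehat H$, the paper splits at $b_\phi$ on the sample side, $\widehat H - H = [\nabla_b^2 Q_{\phi,n}(\widehat b)-\nabla_b^2 Q_{\phi,n}(b_\phi)] + [\nabla_b^2 Q_{\phi,n}(b_\phi)-H]$. It uses the uniform law \eqref{eq:ulln_nabla2} for the second bracket and dismisses the first by ``continuity of $\nabla_b^2 Q_{\phi,n}$'' plus consistency of $\widehat b$. Because $\nabla_b^2 Q_{\phi,n}$ is a random sequence of functions, that step really needs stochastic equicontinuity. Your split at $\widehat b$ on the population side (uniform LLN plus continuity of the limit) supplies exactly that, so your version is the cleaner one here. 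It is also the reasoning the paper itself uses to get \eqref{eq:consistency_H}.

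For $\widehat\Omega$, the paper does not apply a uniform LLN to $\psi\psi'$. Instead it uses the mean-value bound $\|\psi(z,\widehat b)-\psi(z,b_\phi)\|\le F_2(z)\|\widehat b-b_\phi\|$. This bounds the gap between $\widehat\Omega$ and $n^{-1}\sum_i\psi(Z_i,b_\phi)\psi(Z_i,b_\phi)'$ by $\|\widehat b-b_\phi\|\,\mathbb{E}_n[F_1F_2]+\|\widehat b-b_\phi\|^2\,\mathbb{E}_n[F_2^2]$. Those averages are $O_p(1)$ by Cauchy--Schwarz and the ordinary LLN, and the term at $b_\phi$ is handled by the ordinary LLN.

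Your route needs only continuity of $\psi$ in $b$ and $\E F_1^2<\infty$ for this piece, but it delivers only $o_p(1)$. The paper's Lipschitz route additionally uses the second-derivative envelope $F_2$. In exchange it gives an explicit rate, $O_p(\|\widehat b-b_\phi\|)=O_p(n^{-1/2})$ under Corollary \ref{cl:root-n}, rather than mere consistency.
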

%%%%%%%%%%%%%%%%%%%%%%%%%%%%%%%%%%%%%%%%%

The matrix $\widehat{V}$ provides a consistent estimator of the asymptotic covariance of $\sqrt{n}(\widehat{b}-b_\phi)$, and can thus be used to construct standard errors and confidence intervals. 

In particular, the asymptotic normality result in Corollary \ref{cl:root-n}, together with the consistency of the variance estimator in Corollary \ref{cl:var}, implies that for any fixed vector $a \in \mathbb{R}^d$,
\begin{align*}
T_n:=\frac{\sqrt{n}\, a^{\prime}(\widehat{b}-b_\phi)}{\sqrt{a^{\prime} \widehat{V} a}} \rightarrow_{d} \mathcal{N}(0,1),
\end{align*}
which justifies the use of $\widehat{V}$.

While these theoretical results and properties are standard, such “standard” results are rather unexpected in the context of maximum score estimation. We therefore dare to emphasize these points here.

%%%%%%%%%%%%%%%%%%%%%%%%%%%%%%%%%%%%%%%%%
\subsection{Nonparametric Bootstrap with Asymptotic Refinement}\label{sec:bootstrap}
%%%%%%%%%%%%%%%%%%%%%%%%%%%%%%%%%%%%%%%%%

A large literature on the maximum score problem has focused on resampling-based inference, motivated by the well-known failure of the standard nonparametric bootstrap; see Section \ref{sec:introduction} for details.

Once the maximum score problem is rendered ``standard'' via the surrogate score, however, the conventional nonparametric bootstrap becomes applicable and can deliver asymptotic refinements \citep{horowitz2001bootstrap,horowitz2019bootstrap}, often leading to substantial finite-sample improvements. This section formalizes this bootstrap result.

Let $a \in \mathbb{R}^d$ be a fixed vector, and define the scalar parameter $\theta := a^\prime b_\phi$ of interest with estimator $\widehat{\theta} := a^\prime \widehat{b}$. Let $\widehat{V}$ be the consistent variance estimator \eqref{eq:var2}, and define the standard error
\begin{align*}
\widehat{\sigma}^2 := a^\prime \widehat{V} a.
\end{align*}
The corresponding normalized test statistic is given by
\begin{align*}
T_n = \frac{\sqrt{n}\bigl(a^\prime \widehat{b} - a^\prime b_\phi\bigr)}{\widehat{\sigma}}.
\end{align*}

To implement the bootstrap, draw $S$ independent bootstrap samples $\{Z_i^{\ast(s)}\}_{i=1}^n$ with replacement from the empirical distribution. For each $s \in [S]$, compute the bootstrap estimator
\begin{align*}
\widehat{b}^{\ast(s)} := \arg\max_{b \in B} \frac{1}{n} \sum_{i=1}^n \ell_\phi(Z_i^{\ast(s)}, b),
\end{align*}
and define $\widehat{\theta}^{\ast(s)} := a^\prime \widehat{b}^{\ast(s)}$. Let $\widehat{V}^{\ast(s)}$ denote the bootstrap analogue of $\widehat{V}$, and define
\begin{align*}
\widehat{\sigma}^{\ast(s)2} := a^\prime \widehat{V}^{\ast(s)} a.
\end{align*}
The bootstrap studentized statistic is given by
\begin{align*}
T_n^{\ast(s)} := \frac{\sqrt{n}\bigl(a^\prime \widehat{b}^{\ast(s)} - a^\prime \widehat{b}\bigr)}{\widehat{\sigma}^{\ast(s)}}.
\end{align*}

Let $\widehat{q}_{\alpha}^\ast$ denote the empirical $\alpha$-quantile of $\{T_n^{\ast(s)}\}_{s=1}^S$. A $(1-\alpha)$ confidence interval for $\theta = a^\prime b_\phi$ is then given by
\begin{align*}
\left[
a^\prime \widehat{b} - \frac{\widehat{q}_{1-\alpha/2}^\ast \, \widehat{\sigma}}{\sqrt{n}},
\quad
a^\prime \widehat{b} - \frac{\widehat{q}_{\alpha/2}^\ast \, \widehat{\sigma}}{\sqrt{n}}
\right].
\end{align*}

The following corollary establishes the asymptotic validity and higher-order accuracy of the proposed nonparametric bootstrap procedure.

\begin{as}\label{as:bootstrap}
	\begin{enumerate}[(i)]
		\item\label{as:bootstrap1} The Cram\'er's condition,
		$
			\limsup_{\Vert t\Vert\to\infty}\vert \mathbb{E} e^{it^\prime W_i}\vert<1,
		$
		is satisfied for the characteristic function of the three-dimensional random vector
		\begin{align*}
		W_i:=(a^\prime H^{-1}\psi(Z_i,b_\phi),a^\prime H^{-1}(\psi(Z_i,b_\phi)\psi(Z_i,b_\phi)^\prime -\Omega)H^{-1}a,a^\prime H^{-1}(\nabla_b^2 \ell_\phi(Z_i, b)-H)Va)^\prime.
		\end{align*}
		\item\label{as:bootstrap2} Assume that the surrogate loss $\phi$ is three-times differentiable.
		\item\label{as:bootstrap3} Assume that $\mathbb{E}\Vert \psi(Z_i, b_{\phi})\Vert^6 < \infty$.
	\end{enumerate}
\end{as}

\begin{cl}\label{cl:studentized_bootstrap}
		Suppose that the conditions of Corollary \ref{cl:root-n} and Assumption \ref{as:bootstrap} hold. Then,
		\begin{align*}
			\sup_{t \in \mathbb{R}} 
			\left\vert
			\mathbb{P}^\ast(T_n^\ast \leq t) - \mathbb{P}(T_n \leq t)
			\right\vert
			= o(n^{-1/2}),
		\end{align*}
		where $\mathbb{P}^*(\cdot)$ denotes the conditional bootstrap probability over data. In particular, the studentized bootstrap achieves an asymptotic refinement over the normal approximation of Subsection \ref{sec:variance}, whose approximation error is of order $O(n^{-1/2})$.
\end{cl}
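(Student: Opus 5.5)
The proof will follow the standard Edgeworth-expansion route for studentized smooth M-estimators (Hall, 1992; Horowitz, 2001). The idea is to reduce $T_n$ and $T_n^\ast$ to smooth functions of sample means of $W_i$ and its bootstrap analogue, and then compare their Edgeworth expansions term by term.

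\textbf{Step 1 (stochastic expansion of $\widehat b$).} Corollary \ref{cl:root-n} gives $\widehat b \to_p b_\phi$, and $\widehat b$ lies in the interior of $B$ with probability approaching one. Starting from the first-order condition $\mathbb{E}_n\psi(Z,\widehat b)=0$, we Taylor expand to third order, which is possible by Assumption \ref{as:bootstrap}\eqref{as:bootstrap2}. This yields
\begin{align*}
\sqrt n(\widehat b-b_\phi) = -H^{-1}\sqrt n\,\mathbb{E}_n\psi - n^{-1/2}H^{-1}\Bigl(\sqrt n(\widehat H_0-H)\Bigr)H^{-1}\sqrt n\,\mathbb{E}_n\psi + (\text{quadratic term}) + R_n,
\end{align*}
with $P(|R_n|>n^{-1/2}\epsilon_n)=o(n^{-1/2})$ for some $\epsilon_n\to0$. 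Here $\widehat H_0$ denotes the Hessian evaluated at $b_\phi$. Expanding $\widehat\sigma$ in the same way around $\sigma=(a'H^{-1}\Omega H^{-1}a)^{1/2}$ then gives
\begin{align*}
T_n = g(\bar W_n) + o_p(n^{-1/2}),
\end{align*}
where $g$ is smooth and $g(0)=0$. The three components of $W_i$ are exactly the ones listed in Assumption \ref{as:bootstrap}\eqref{as:bootstrap1}.

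\textbf{Step 2 (Edgeworth expansions).} Under Cramér's condition and the moment bound of Assumption \ref{as:bootstrap}\eqref{as:bootstrap3}, together with the envelope bounds of Assumption \ref{as:asymptotic}\eqref{as:asymptotic4}, the Bhattacharya--Ghosh theorem applies. It yields
\begin{align*}
P(T_n\le t)=\Phi(t)+n^{-1/2}p_1(t;\kappa)\varphi(t)+o(n^{-1/2})
\end{align*}
uniformly in $t$, where $p_1$ is a polynomial whose coefficients $\kappa$ are low-order moments of $W$. For the bootstrap we repeat Steps 1--2 under $\mathbb P^\ast$. The bootstrap estimator $\widehat b^\ast$ solves the empirical first-order condition centered at $\widehat b$, so the same expansion holds with population moments replaced by sample moments. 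Since the empirical distribution is lattice, we use the standard device for the studentized mean: the bootstrap expansion holds with probability $1-o(n^{-1/2})$ (Hall, 1992, Thm.~5.1). This gives
\begin{align*}
\mathbb P^\ast(T_n^\ast\le t)=\Phi(t)+n^{-1/2}p_1(t;\widehat\kappa)\varphi(t)+o_p(n^{-1/2}).
\end{align*}

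\textbf{Step 3 (comparison).} Because $\widehat\kappa-\kappa=O_p(n^{-1/2})$ by the law of large numbers and the CLT under the sixth-moment bound, the $n^{-1/2}$ terms of the two expansions differ by $O_p(n^{-1})$. Subtracting the expansions then gives the claimed $o(n^{-1/2})$ bound; this is read as $o_p$, or as holding on events of probability $1-o(n^{-1/2})$. The normal approximation, by contrast, has error $n^{-1/2}p_1\varphi=O(n^{-1/2})$.

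\textbf{Main obstacle.} The hardest step is controlling the remainders to order $o(n^{-1/2})$ \emph{in probability with rate}, i.e.\ $P(|R_n|>n^{-1/2}\epsilon_n)=o(n^{-1/2})$, rather than mere $o_p$. This requires moderate-deviation bounds for $\widehat b-b_\phi$ and for the uniform deviation of the Hessian and third derivatives. My first attempt would be to exploit strict concavity of $Q_{\phi,n}$. Concavity localizes $\widehat b$ through the gradient alone: outside a ball of radius $\delta$ around $b_\phi$, the objective can exceed its value at $b_\phi$ only if $\|\mathbb{E}_n\psi(Z,b_\phi)\|$ is large relative to the curvature. Markov's inequality with sixth moments then gives $P(\|\widehat b-b_\phi\|>\delta)=O(n^{-3/2})$. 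The bootstrap analogue follows by the same argument applied to the empirical measure.
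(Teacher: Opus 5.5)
You follow the same route as the paper. You write $T_n$ (and, under $\mathbb P^\ast$, $T_n^\ast$) as a smooth function of sample means plus a negligible remainder, apply the smooth-function Edgeworth theorem to both (Hall's Theorem 5.1 on the bootstrap side), and use $\widehat p_1-p_1=o_p(1)$.

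In three places you are more careful than the paper's own proof:
\begin{itemize}
\item You keep the $O_p(n^{-1/2})$ terms of $\sqrt n\,a'(\widehat b-b_\phi)$. The paper discards them by writing that remainder as $o_p(1)$ and then asserting $T_n=G(\bar\varphi,\bar\xi)+o_p(n^{-1/2})$.
\item You see that a remainder that is merely $o_p(n^{-1/2})$ does not transfer an expansion with error $o(n^{-1/2})$, and you ask for $P(|R_n|>\epsilon_n n^{-1/2})=o(n^{-1/2})$ instead.
\item You read the conclusion as $o_p(n^{-1/2})$, which is all one can get since $\mathbb P^\ast$ is random.
\end{itemize}

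The step that fails as written is your identification of the vector of means with the three-dimensional $W_i$ of Assumption \ref{as:bootstrap}(i). With the second-order terms you keep, three objects appear:
\begin{itemize}
\item $a'H^{-1}(\widehat H_0-H)H^{-1}\mathbb E_n\psi$ is an inner product of the two $d$-vectors $a'H^{-1}(\widehat H_0-H)$ and $H^{-1}\mathbb E_n\psi$.
\item The third-derivative term is a quadratic form in $H^{-1}\mathbb E_n\psi$.
\item Because $\widehat H$ and $\widehat\Omega$ are evaluated at $\widehat b$, the expansion of $\widehat\sigma^2$ contains a term linear in $\widehat b-b_\phi$ (through $\nabla_b^3\ell_\phi$ and $\nabla_b(\psi\psi')$). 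This is a linear functional of the whole vector $\mathbb E_n\psi$, not of $a'H^{-1}\mathbb E_n\psi$.
\end{itemize}
For $d\ge 2$ none of these is a function of the three scalar means in $W_i$. So the Bhattacharya--Ghosh theorem you invoke needs Cram\'er's condition, and third moments, for a larger vector such as $\bigl(\psi,\mathrm{vec}\,\nabla_b^2\ell_\phi,\mathrm{vech}\,\psi\psi'\bigr)$ at $b_\phi$. You must either strengthen the hypothesis accordingly, or appeal to a refinement of the smooth-function expansion that imposes Cram\'er's condition only on the leading linear term $\varphi$.

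Three further ingredients are not supplied by the stated assumptions:
\begin{itemize}
\item The cubic remainder needs a square-integrable envelope for $\nabla_b^3\ell_\phi$ near $b_\phi$.
\item The Hessian components need a third moment, while Assumption \ref{as:asymptotic}(iv) gives only $\mathbb{E}[F_2(Z)^2]<\infty$.
\item Your concavity localization must also control the sample curvature on the $\delta$-sphere, not only $\mathbb E_n\psi(Z,b_\phi)$. Concavity plus Chebyshev at finitely many points of a net gives $P(\|\widehat b-b_\phi\|>\delta)=O(n^{-1})$, which suffices.
\end{itemize}
The paper's proof has all of these gaps, and avoids noticing the first only by dropping the numerator's second-order terms. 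So they point to hypotheses missing from the statement as much as to defects of your sketch.
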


%%%%%%%%%%%%%%%%%%%%%%%%%%%%%%%%%%%%%%%%%
\section{Simulation Evidence}\label{sec:simulation}
%%%%%%%%%%%%%%%%%%%%%%%%%%%%%%%%%%%%%%%%%

While we have established the theory of root-$n$ asymptotic normality, the conventional theory for the maximum score estimator does not exhibit these desirable properties. 
Therefore, it is useful to provide numerical evidence in support of our theoretical predictions. 

This section presents simulation studies that demonstrate the root-$n$ convergence rate (Section \ref{sec:sim:root_n}), the normal limiting distribution (Section \ref{sec:sim:normal}), and the validity of inference methods based on these theoretical properties (Section \ref{sec:sim:inference}). 
We begin by presenting our simulation design in Section \ref{sec:design}.

%%%%%%%%%%%%%%%%%%%%%%%%%%%%%%%%%%%%%%%%%
\subsection{Simulation Design}\label{sec:design}
%%%%%%%%%%%%%%%%%%%%%%%%%%%%%%%%%%%%%%%%%

We generate $n$ independent copies of $(Y_i, X_i')'$ from the threshold-crossing model \eqref{eq:threshold_crossing}. 
Let 
\[
\Sigma=\begin{pmatrix}
	1 & 0.5\\
	0.5 & 1
\end{pmatrix},
\]
and consider the following three designs for $X_i$ in line with the discussion in Section \ref{sec:examples_distribution}:
\begin{enumerate}[(i)]
	\item \textbf{Normal distribution}: $X_i\sim N(0,\Sigma)$;
	\item \textbf{$t$ distribution ($t_5$)}: $X_i=Z_i/\sqrt{U_i/5}$, where $Z_i\sim N(0,\Sigma)$ and $U_i\sim\chi^2_5$ independently;
	\item \textbf{Laplace distribution}: $X_i=\sqrt{S_i}\,Z_i$, where $Z_i\sim N(0,\Sigma)$ and $S_i\sim \text{Exp}(1)$ independently.
\end{enumerate}
Let $\varepsilon_i \sim \text{Logistic}(0,1)$ be independent of $X_i$, and set $b_0 = (1,1)^\prime / \sqrt{2}$.

Our parameter of interest is a scalar reparameterization $\theta$ of the two-dimensional coefficient vector $b$ under the unit-norm normalization. Specifically, we write
\begin{align*}
	b(\theta)=
	\begin{pmatrix}
		\cos\theta\\ \sin\theta
	\end{pmatrix}, 
	\qquad 
	\|b(\theta)\|=1.
\end{align*}
Thus, estimating $\theta$ is equivalent to estimating the direction of $b$. 
Given the true parameter values $b_0 = \frac{1}{\sqrt{2}}(1,1)^\prime$, the corresponding true angle is $\theta_0 = \frac{\pi}{4}$. 
This parameterization is convenient because the threshold-crossing model \eqref{eq:threshold_crossing} yields observationally equivalent distributions under scaling of $b$, whereas $\theta$ provides a unique scalar target for reporting RMSE and plotting densities.
Note that this reparameterization is smooth, so the theoretical results established in Section \ref{sec:estimation} extend to $\theta$.

%%%%%%%%%%%%%%%%%%%%%%%%%%%%%%%%%%%%%%%%%
\subsection{Evidence of the Root-$n$ Consistency}\label{sec:sim:root_n}
%%%%%%%%%%%%%%%%%%%%%%%%%%%%%%%%%%%%%%%%%

This section presents evidence of the root-$n$ convergence rate in support of Corollary \ref{cl:root-n}.
For each design, we conduct $10{,}000$ Monte Carlo replications for each sample size $n \in \{250, 1{,}000\}$. 
The purpose of considering these two sample sizes is as follows: if the theoretical prediction of the root-$n$ convergence rate is correct, then the root mean squared error at $n = 1{,}000$ (denoted by $\mathrm{RMSE}(1000)$) should be approximately one-half of that at $n = 250$ (denoted by $\mathrm{RMSE}(250)$).

As a benchmark, we compute (0) the conventional maximum score estimates. 
Their convergence rate is cube-root-$n$, so $\mathrm{RMSE}(1000)/\mathrm{RMSE}(250)$ will not be approximately one-half, unlike for our surrogate maximum score estimator. 
For the surrogate loss function $\phi$, we consider (1) the logistic loss (with $a=1$), (2) the pseudo-Huber loss (with $a=2$), and (3) the probit loss (with $a=0.5$), as introduced in Section \ref{sec:existence_uniqueness} as examples satisfying our theoretical requirements.
Table \ref{tab:rmse} summarizes the simulated RMSEs. 

%%%%%%%%%%%%%%%%%%%%%%%%%%%%%%%%%%%%%%%%%
\begin{table}[tb]
	\centering
	\caption{Evidence of the Root-$n$ Consistency}\label{tab:rmse}
	\medskip
	\begin{tabular}{llccc}
		\hline
		$X$ & Estimation Method & RMSE($250$) & RMSE($1000$) & $\frac{\mathrm{RMSE}(1000)}{\mathrm{RMSE}(250)}$ \\
		\hline
		\multirow{4}{*}{(i) Normal}
		& (0) Conventional Maximum Score & 0.373 & 0.242 & 0.647 \\
		\cline{2-5}
		& (1) Surrogate Logistic      & 0.202 & 0.101 & \bf 0.502 \\
		& (2) Surrogate Huber         & 0.184 & 0.091 & \bf 0.496 \\
		& (3) Surrogate Probit        & 0.165 & 0.082 & \bf 0.494 \\
		\hline
		\multirow{4}{*}{(ii) $t_5$}
		& (0) Conventional Maximum Score & 0.342 & 0.218 & 0.638 \\
		\cline{2-5}
		& (1) Surrogate Logistic      & 0.176 & 0.088 & \bf 0.499 \\
		& (2) Surrogate Huber         & 0.159 & 0.079 & \bf 0.495 \\
		& (3) Surrogate Probit        & 0.151 & 0.075 & \bf 0.495 \\
		\hline
		\multirow{4}{*}{(iii) Laplace}
		& (0) Conventional Maximum Score & 0.401 & 0.260 & 0.650 \\
		\cline{2-5}
		& (1) Surrogate Logistic      & 0.216 & 0.108 & \bf 0.499 \\
		& (2) Surrogate Huber         & 0.196 & 0.097 & \bf 0.494 \\
		& (3) Surrogate Probit        & 0.181 & 0.090 & \bf 0.494 \\
		\hline
	\end{tabular}
	\begin{minipage}{1\textwidth}
		\small \medskip
		\textit{Notes:} The row groups correspond to three distributions of $X_i$: Normal, $t_5$, and Laplace.
		Within each distribution, the methods are ordered as follows: (0) the conventional maximum score, (1) the surrogate logistic loss (with $a=1$), (2) the surrogate pseudo-Huber loss (with $a=2$), and (3) the surrogate probit loss (with $a=0.5$). 
		The columns report the RMSE for sample sizes $n=250$ and $n=1000$, together with their ratio $\mathrm{RMSE}(1000)/\mathrm{RMSE}(250)$.
		Root-$n$ consistency is evidenced by $\mathrm{RMSE}(1000)/\mathrm{RMSE}(250) \approx 0.5$.
		Each reported value is based on $10{,}000$ Monte Carlo replications.
	\end{minipage}
\end{table}
%%%%%%%%%%%%%%%%%%%%%%%%%%%%%%%%%%%%%%%%%

First, observe that the ratio $\mathrm{RMSE}(1000)/\mathrm{RMSE}(250)$ for (0) the conventional maximum score estimator approximately ranges from 0.64 to 0.65. 
This range is consistent with the well-known fact that the conventional maximum score estimator exhibits a cube-root-$n$ convergence rate. 
Indeed, the theoretically predicted ratio for the cube-root-$n$ consistent estimator is $\mathrm{RMSE}(1000)/\mathrm{RMSE}(250) \approx 0.63$. 

Second, in contrast, all surrogate maximum score estimators (i)--(iii) have RMSE ratios close to $1/2$ (approximately $0.49$--$0.50$) as the sample size increases from $250$ to $1{,}000$, which is consistent with root-$n$ behavior. 
Indeed, the theoretically predicted ratio for the root-$n$ consistent estimator is $\mathrm{RMSE}(1000)/\mathrm{RMSE}(250) = 0.5$. 
These findings align with the theory: the surrogate procedures attain root-$n$ convergence, whereas the original maximum score estimator exhibits the well-known cube-root-$n$ rate.

%%%%%%%%%%%%%%%%%%%%%%%%%%%%%%%%%%%%%%%%%
\subsection{Evidence of the Asymptotic Normality}\label{sec:sim:normal}
%%%%%%%%%%%%%%%%%%%%%%%%%%%%%%%%%%%%%%%%%

While the previous exercise examined the root-$n$ convergence rate, we now turn to the limiting distribution. 
Specifically, our theory predicts that the limiting distribution is normal -- cf. Corollary \ref{cl:root-n}. 
We use simulation studies to demonstrate that the distribution of the surrogate maximum score estimates is indeed well approximated by a normal distribution.

We continue to use the simulation design introduced in Section \ref{sec:design} and focus on the sample size $n=1000$. 
Figure \ref{fig:density} plots the simulated densities of the surrogate maximum score estimates from $10{,}000$ replications as solid lines and overlays them with reference normal densities (constructed using the simulated mean and variance) as dashed lines. 
The first, second, and third rows present results for (1) the surrogate logistic (with $a=1$), (2) he surrogate pseudo-Huber (with $a=2$), and (3) the probit loss (with $a=0.5$), respectively. 
The first, second, and third columns present results under (i) the normal distribution, (ii) the $t_5$ distribution, and (iii) the Laplace distribution for $X_i$, respectively.

\begin{figure}[htbp]
	\centering
	\includegraphics[width=\textwidth]{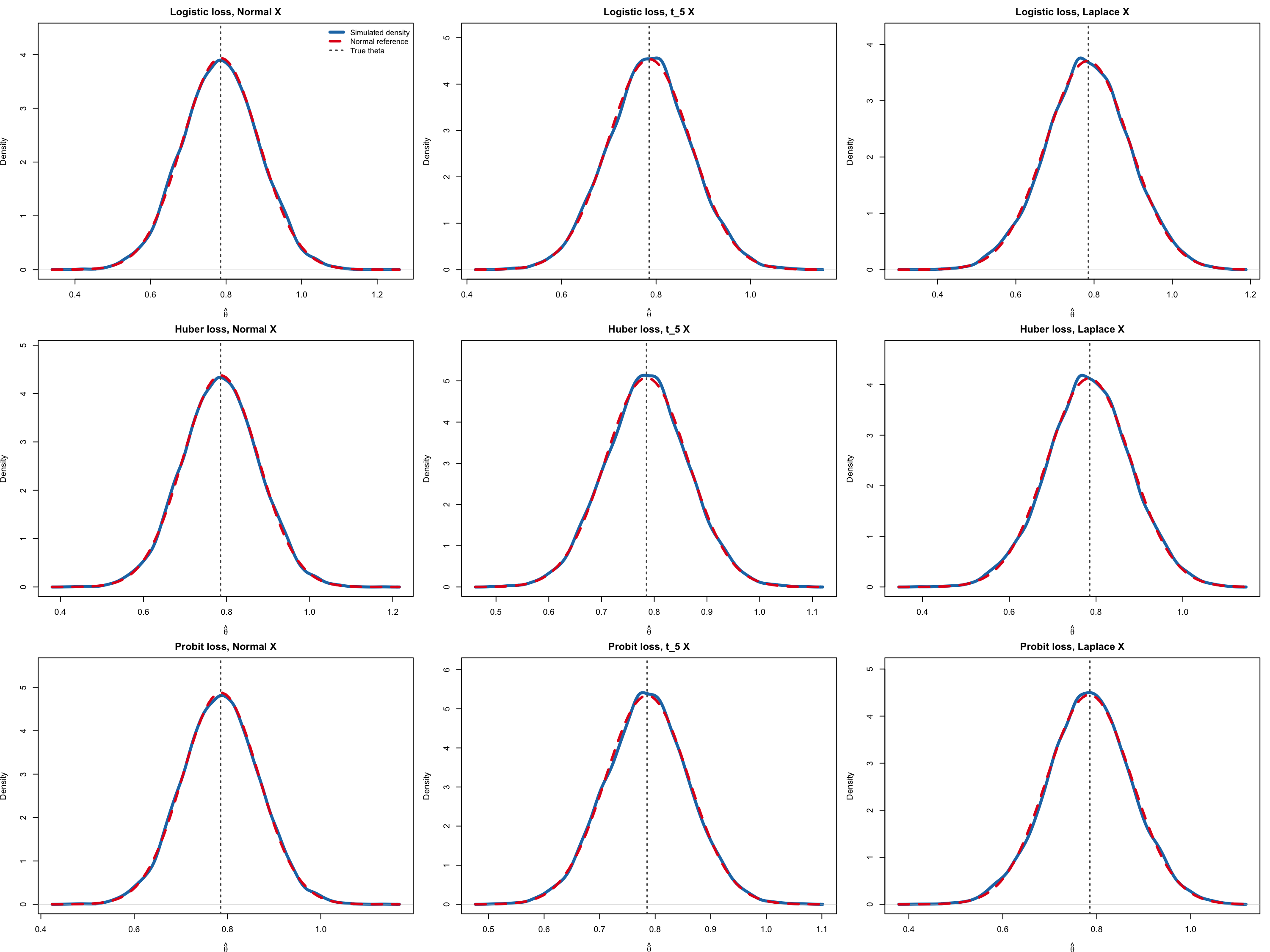}
	\caption{Evidence of the Asymptotic Normality by Density Plots.}\label{fig:density}
	\medskip
	\begin{minipage}{1\textwidth}
		\small
		\textit{Notes:} The first, second, and third rows present results for (1) the surrogate logistic loss (with $a=1$), (2) the surrogate pseudo-Huber loss (with $a=2$), and (3) the surrogate probit loss (with $a=0.5$). respectively. 
		The first, second, and third columns present results under (i) the normal distribution, (ii) the $t_5$ distribution, and (iii) the Laplace distribution for $X_i$, respectively. Solid lines represent the simulated densities of the estimates from $10{,}000$ replications, and red dashed lines represent the corresponding normal reference densities with matched simulation mean and variance. The vertical gray line marks the true parameter value $\theta_0 = \pi/4$.
	\end{minipage}
\end{figure}

Observe that each panel shows the simulated densities closely aligned with the reference normal densities, indicating that the distribution of the surrogate maximum score estimates is well approximated by a normal distribution, consistently with our Corollary \ref{cl:root-n}. 

To further support this observation, Figure \ref{fig:qq} presents the corresponding Q--Q plots for Figure \ref{fig:density}. 
We see that the points lie almost exactly along the 45-degree line, again indicating that the distribution of the surrogate maximum score estimates is well approximated by a normal distribution, supporting our Corollary \ref{cl:root-n}.

\begin{figure}[htbp]
	\centering
	\includegraphics[width=\textwidth]{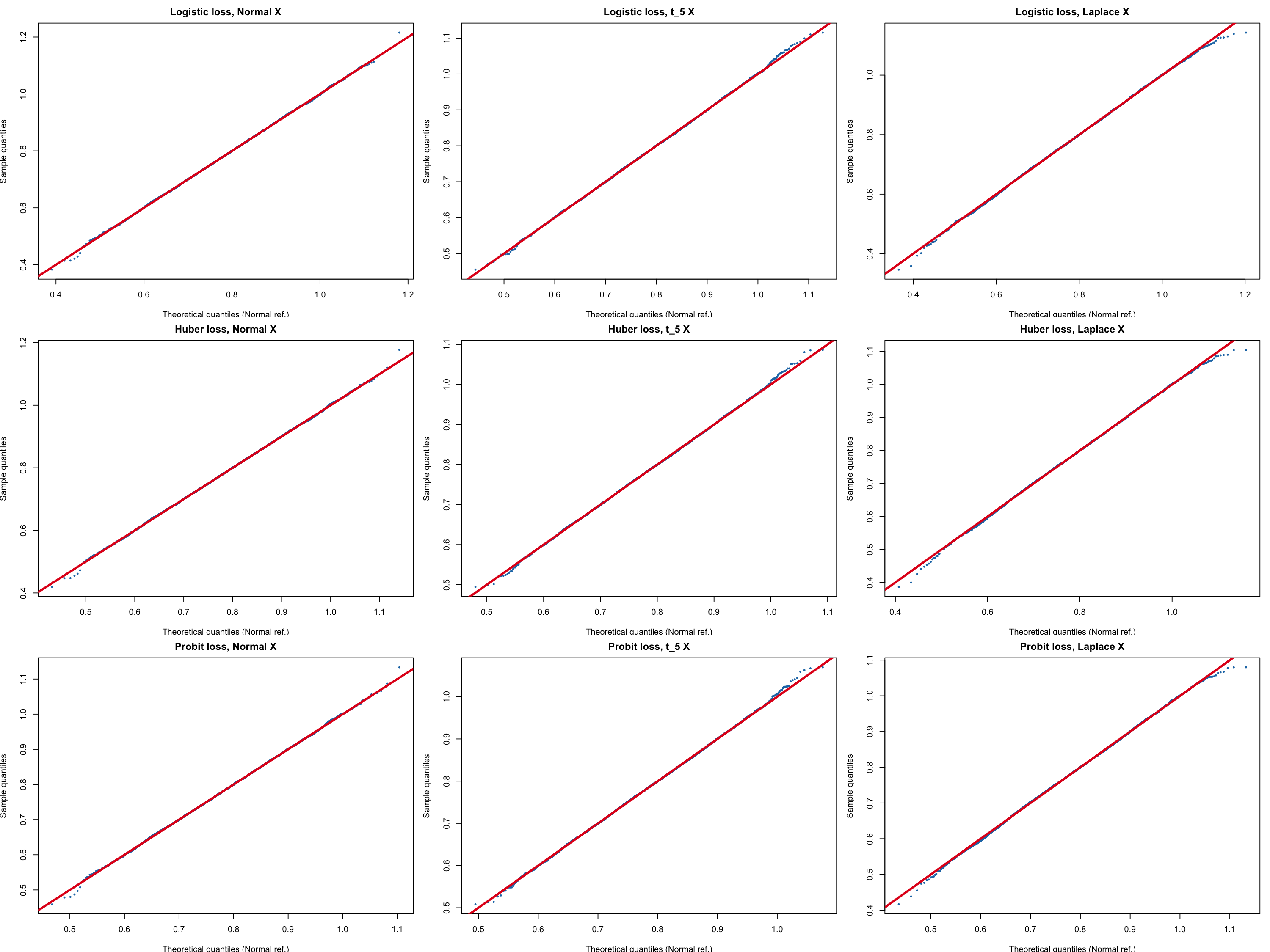}
	\caption{Evidence of the Asymptotic Normality by Q-Q Plots.}\label{fig:qq}
	\medskip
	\begin{minipage}{1\textwidth}
		\small
		\textit{Notes:} The first, second, and third rows present results for (1) the surrogate logistic loss (with $a=1$), (2) the surrogate pseudo-Huber loss (with $a=2$), and (3) the surrogate probit loss (with $a=0.5$). respectively. 
		The first, second, and third columns present results under (i) the normal distribution, (ii) the $t_5$ distribution, and (iii) the Laplace distribution for $X_i$, respectively. 
		Each panel compares the empirical quantiles of the estimators, based on $10{,}000$ Monte Carlo replications, with the corresponding quantiles of a matched normal reference distribution.
	\end{minipage}
\end{figure}

%%%%%%%%%%%%%%%%%%%%%%%%%%%%%%%%%%%%%%%%%
\subsection{Evidence of the Validity of the Standard Inference}\label{sec:sim:inference}
%%%%%%%%%%%%%%%%%%%%%%%%%%%%%%%%%%%%%%%%%
Thus far, we have examined root-$n$ asymptotic normality through simulations to corroborate our theory. 
Given this result, we can conduct standard inference, such as using normal critical values with analytic standard errors (as established in Section \ref{sec:variance}), or bootstrap methods (as established in Section \ref{sec:bootstrap}). 
In this section, we demonstrate the validity of these standard inference procedures through simulations.

We continue to use the simulation design introduced in Section \ref{sec:design} and consider sample sizes $n \in \{250, 1000\}$ to examine how coverage improves as the sample size increases. 
Table \ref{tab:coverage} reports the coverage probabilities of \(95\%\) confidence intervals constructed using two methods: (A) the analytic variance estimator \eqref{eq:var2} together with normal critical values; and (B) the bootstrap procedure. 
The first, second, and third column groups correspond to the (i) normal, (ii) $t_5$, and (iii) Laplace distributions of $X_i$, respectively. 
Within each row group, we report results for (1) the surrogate logistic (with $a=1$), (2) the surrogate pseudo-Huber (with $a=2$), and (3) the surrogate probit (with $a=0.5$) maximum score methods.

%%%%%%%%%%%%%%%%%%%%%%%%%%%%%%%%%%%%%%%%%
\begin{table}[ht]
	\centering
	\caption{Evidence of the Validity of the Standard Inference Methods}\label{tab:coverage}
	\medskip
	\begin{tabular}{llccccc}
		\hline
		\multirow{2}{*}{$X$} & \multirow{2}{*}{Estimation Method} & \multicolumn{2}{c}{(A) Analytic} && \multicolumn{2}{c}{(B) Bootstrap} \\ \cline{3-4}\cline{6-7}
		&  & \(n=250\) & \(n=1000\) && \(n=250\) & \(n=1000\) \\ \hline
		\multirow{3}{*}{(i) Normal}
		& (1) Surrogate Logistic & 0.921 & 0.945 && 0.937 & 0.952 \\
		& (2) Surrogate Huber & 0.931 & 0.945 && 0.941 & 0.949 \\
		& (3) Surrogate Probit & 0.938 & 0.949 && 0.943 & 0.950 \\
		\hline
		\multirow{3}{*}{(ii) $t_5$}
		& (1) Surrogate Logistic & 0.925 & 0.944 && 0.945 & 0.951 \\
		& (2) Surrogate Huber & 0.929 & 0.946 && 0.939 & 0.950 \\
		& (3) Surrogate Probit & 0.935 & 0.949 && 0.943 & 0.951 \\
		\hline
		\multirow{3}{*}{(iii) Laplace}
		& (1) Surrogate Logistic & 0.916 & 0.942 && 0.934 & 0.950 \\
		& (2) Surrogate Huber & 0.929 & 0.948 && 0.940 & 0.953 \\
		& (3) Surrogate Probit & 0.934 & 0.947 && 0.941 & 0.951 \\
		\hline
	\end{tabular}
	\begin{minipage}{1\textwidth}
		\medskip \small
		\textit{Notes:} The row groups correspond to three distributions of $X_i$: Normal, $t_5$, and Laplace.
		Within each distribution, the methods are ordered as follows: (1) the surrogate logistic loss ($a=1$), (2) the surrogate pseudo-Huber loss ($a=2$), and (3) the surrogate probit loss ($a=0.5$).
		The columns report the empirical coverage probabilities of nominal $95\%$ confidence intervals based on (A) the analytic covariance estimator and (B) the bootstrap covariance estimator for sample sizes $n=250$ and $n=1000$. 
		Each reported value is based on $10{,}000$ Monte Carlo replications.
	\end{minipage}
\end{table}
%%%%%%%%%%%%%%%%%%%%%%%%%%%%%%%%%%%%%%%%%

First, we focus on the standard inference based on (A) the analytic variance estimation with normal critical values. 
As the sample size increases from $n=250$ to $n=1000$, the coverage probabilities generally move closer to the nominal level of $0.95$. %This pattern is particularly clear for the logistic and pseudo-Huber surrogates.
These results therefore provide evidence for the validity of the standard inference procedure, as guaranteed by our theory in Sections \ref{sec:asymptotic}--\ref{sec:variance}.

Next, we turn to the standard inference based on (B) the bootstrap method.
Overall, the bootstrap-based confidence intervals deliver coverage probabilities close to the nominal $95\%$ level across all designs. The bootstrap coverage is closer to $0.95$ than its analytic counterpart for for all three surrogate losses when $n=250$ (where the analytic method performed relatively poorly), suggesting the finite-sample improvement.  As the sample size increases to $n=1000$, the analytic and bootstrap procedures perform very similarly. These results suggest that the bootstrap procedure can provide a small-sample refinement, while both methods remain consistent with the asymptotic theory.

%%%%%%%%%%%%%%%%%%%%%%%%%%%%%%%%%%%%%%%%%
\section{Conclusions}\label{sec:conclusion}
%%%%%%%%%%%%%%%%%%%%%%%%%%%%%%%%%%%%%%%%%

The maximum score method is a powerful tool for analyzing semiparametric binary choice models and has been highly influential in the econometrics literature. 
At the same time, it faces a number of practical and theoretical challenges. 
These challenges are part of the reason why the literature has remained active since the method’s introduction. 

In particular, the lack of the standard root-$n$ convergence rate and asymptotic normality has motivated many researchers to develop methods that either address or accommodate these limitations. 
In this paper, we revisit this problem and, rather than addressing or accommodating these challenges, investigate how to achieve root-$n$ asymptotic normality. 

Specifically, we study the conditions under which these standard asymptotic properties hold for surrogate maximum score methods.
These conditions, stated as Conditions (T.\ref{thm:main}.1)--(T.\ref{thm:main}.2) in Theorem \ref{thm:main} are nontrivial.
Yet, they are sufficiently general to encompass a large class of distributions for $X$.

Under these conditions, we establish root-$n$ asymptotic normality and the validity of standard inference methods, such as the use of normal critical values with analytic variance estimation, as well as bootstrap procedures for small-sample refinement. 
Extensive simulation studies corroborate our theoretical claims regarding the root-$n$ convergence rate, asymptotic normality, and the validity of these standard inference methods.

%%%%%%%%%%%%%%%%%%%%%%%%%%%%%%%%%%%%%%%%%
\newpage
\appendix
\section*{Appendix}
%%%%%%%%%%%%%%%%%%%%%%%%%%%%%%%%%%%%%%%%%
\section{Mathematical Proofs}
%%%%%%%%%%%%%%%%%%%%%%%%%%%%%%%%%%%%%%%%%

%%%%%%%%%%%%%%%%%%%%%%%%%%%%%%%%%%%%%%%%%
\subsection{Proof of Lemma \ref{lm:parallel}}\label{sec:lm:parallel}
%%%%%%%%%%%%%%%%%%%%%%%%%%%%%%%%%%%%%%%%%
\bigskip\noindent
\textit{Proof of Lemma \ref{lm:parallel}.}
Let $b_\phi \in \arg\max_{b \in B} Q_\phi(b)$.
Assumption \ref{as:bayes} \eqref{as:bayes:correct} and Condition (L.\ref{lm:parallel}.2) yield
\begin{equation}\label{eq:lm:parallel:equal}
\1\{X'b_\phi \ge 0\} = \1\{X'b_0 \ge 0\} \qquad\text{a.s.}
\end{equation}
We are going to claim that $b_\phi \parallel b_0$.
By way of contradiction, suppose $b_\phi \not\parallel b_0$.
Then,
$$
\Pr(\1\{X'b_\phi \ge 0\} \neq \1\{X'b_0 \ge 0\} ) > 0
$$
by Condition (L.\ref{lm:parallel}.1), contradicting Equation \eqref{eq:lm:parallel:equal}.
Therefore, $b_\phi = c b_0$ for some $c \neq 0$.

We next argue that $c>0$. 
By way of contradiction, suppose $c < 0$.
Then,
$$
\1\{X'b_\phi \ge 0\} = \1\{X'b_0 \leq 0\}
$$
holds.
This equality and Equation \eqref{eq:lm:parallel:equal} imply
$$
\1\{X'b_0 \ge 0\} = \1\{X'b_0 \le 0\} \qquad\text{a.s.}
$$
This almost sure equality implies
$$
\Pr(X'b_0 > 0) = \Pr(X'b_0 < 0) = 0,
$$
contradicting Assumption \ref{as:bayes} \eqref{as:bayes:nondegeneracy}.
Therefore, $c>0$.

Finally, we are going to claim that $b_\phi \in \arg\max_b Q_0 (b)$.
Apply the law of iterated expectations to rewrite $Q_0(b)$ as
\begin{align*}
Q_0(b)
=&
\E[\eta(X) \cdot \1\{X'b \ge 0\} + (1-\eta(X)) \cdot \1\{X'b < 0\}]
\\
=& \E[1 - \eta(X)] + 2 \E[(\eta(X)-1/2) \cdot \1\{X'b \ge 0\}].
\end{align*}
The last term in this equation and Assumption \ref{as:bayes} \eqref{as:bayes:correct} imply that $b_0 \in \arg\max_{b \in B} Q_0(b)$.
Thus, Equation \eqref{eq:lm:parallel:equal} further implies that $b_\phi \in \arg\max_{b \in B} Q_0(b)$.
\qed
%%%%%%%%%%%%%%%%%%%%%%%%%%%%%%%%%%%%%%%%%

%%%%%%%%%%%%%%%%%%%%%%%%%%%%%%%%%%%%%%%%%
\subsection{Proof of Lemma \ref{lm:existence_uniqueness}}\label{sec:lm:existence_uniqueness}
%%%%%%%%%%%%%%%%%%%%%%%%%%%%%%%%%%%%%%%%%
\bigskip\noindent
\textit{Proof of Lemma \ref{lm:existence_uniqueness}.}
(Existence)
We can write $Q_\phi(b) = \E[f((Y,X);b)]$, where
$$
f((y,x);b) = y \cdot \phi(x'b) + (1-y) \cdot \phi(-x'b).
$$
By Assumption \ref{as:moments} \eqref{as:moments:bounded}, the function $\overline f$ defined by
$$
\overline f(y,x) = \sup_{b \in B} (|\phi(x'b)| + |\phi(-x'b)|),
$$
which does not depend on $b$, serves as an integrable envelop of $f$ uniformly on $B$.
Let $b_n \to b$ in $B$. 
Then, by the continuity of $\phi$ guaranteed under Assumption \ref{as:surrogate}, we have $f((y,x);b_n) \to f((y,x);b)$ for each $(y,x)$.
With the aforementioned uniform integrability, therefore, Lebesgue Dominated Convergence Theorem yields $Q_\phi(b_n) \to Q_\phi(b)$, showing that $Q_\phi$ is continuous on $B$.
Assumption \ref{as:bayes} requires that $B$ is compact and nonempty.
Therefore, Weierstrass Theorem concludes that $\arg\max_{b \in B} Q_\phi(b)$ is nonempty.

\bigskip\noindent
(Uniqueness)
Let $b_1, b_2 \in B$ with $b_1 \neq b_2$, and let $\lambda \in (0,1)$.
Then,
$$
b_\lambda = \lambda b_1 + (1-\lambda) b_2 \in B
$$
by Assumption \ref{as:moments} \eqref{as:moments:convex}.
Because $\phi$ is concave under Assumption \ref{as:surrogate},
\begin{align}
\phi(\lambda X'b_1 + (1-\lambda) X'b_2) \ge& \lambda \cdot \phi(X'b_1) + (1-\lambda) \cdot \phi(X'b_2)
\label{eq:lm:existence_uniqueness:concave1}
\\
\phi(-\lambda X'b_1 - (1-\lambda) X'b_2) =& \phi(\lambda(-X'b_1) + (1-\lambda)(-X'b_2)
\notag
\\
\ge& \lambda \phi(-X'b_1) + (1-\lambda) \phi(-X'b_2)
\label{eq:lm:existence_uniqueness:concave2}
\end{align}
Adding $Y$ times \eqref{eq:lm:existence_uniqueness:concave1} and $(1-Y)$ times \eqref{eq:lm:existence_uniqueness:concave2}, and then taking the expectation yields
$$
Q_\phi(b_\lambda) \geq \lambda Q_\phi(b_1) + (1-\lambda) Q_\phi(b_2),
$$
showing that $Q_\phi$ is concave on $B$.

To show the strict concavity, note that the inequalities \eqref{eq:lm:existence_uniqueness:concave1}--\eqref{eq:lm:existence_uniqueness:concave2} hold with strict inequalities whenever $X'(b_1-b_2) \neq 0$.
Thus, $Y$ times \eqref{eq:lm:existence_uniqueness:concave1} entails a strict inequality on the event $A_1$ where $X'(b_1-b_2) \neq 0$ and $Y=1$.
Likewise, $(1-Y)$ times \eqref{eq:lm:existence_uniqueness:concave2} entails a strict inequality on the event $A_2$ where $X'(b_1-b_2) \neq 0$ and $Y=0$.
Since $Y$ is $\{0,1\}$-valued, Assumption \ref{as:moments} \eqref{as:moments:index} implies $\Pr(A_1 \cup A_2) > 0$.
Therefore, $\Pr(A_1) > 0$ or $\Pr(A_2) > 0$, and the same addition and integration as above yields
$$
Q_\phi(b_\lambda) > \lambda Q_\phi(b_1) + (1-\lambda) Q_\phi(b_2),
$$
showing that $Q_\phi$ is strictly concave on $B$.

\bigskip
The convexity of $B$ and the strict concavity of $Q_\phi$ as shown above implies the uniqueness of the solution to $\max_{b \in B} Q_\phi(b)$.
\qed
%%%%%%%%%%%%%%%%%%%%%%%%%%%%%%%%%%%%%%%%%

%%%%%%%%%%%%%%%%%%%%%%%%%%%%%%%%%%%%%%%%%
\subsection{Proof of Proposition \ref{pp:nonzero_probability}}\label{sec:pp:nonzero_probability}
%%%%%%%%%%%%%%%%%%%%%%%%%%%%%%%%%%%%%%%%%
\bigskip\noindent
\textit{Proof of Proposition \ref{pp:nonzero_probability}.}
Let $b_1, b_2 \in B$ be nonparallel.
Define
$$
D = \{x \in \mathbb{R}^d : \1\{x'b_1 \ge 0\} \neq \1\{x'b_2 \ge 0\}\}.
$$
Let $r \in (0,\infty)$.
We are going to show that $\lambda_d( D \cap B_r(0) ) > 0$. 
Let 
$$
v = b_1 - \frac{b_1'b_2}{\|b_2\|^2} b_2.
$$
Then, $v'b_2 = 0$ and also $v \neq 0$ since $b_1 \not\parallel b_2$.
Furthermore,
$$
v'b_1 = \|b_1\|^2 - \frac{(b_1'b_2)^2}{\|b_2\|^2} > 0
$$
by Cauchy-Schwarz inequality, where the strictly inequality is due to $b_1 \not\parallel b_2$.

\bigskip\noindent
{\bf Case: $b_1'b_2 \le 0$.}
In this case, $x_\varepsilon = -b_2 + \varepsilon v$ for $\varepsilon > 0$ satisfies
$$
x_\varepsilon'b_2 = -\|b_2\|^2 < 0
\qquad\text{and}\qquad
x_\varepsilon'b_1 = -b_2'b_1 + \varepsilon v'b_1 > 0.
$$

\medskip\noindent
{\bf Case: $b_1'b_2 > 0$.}
In this case, $x_\varepsilon = b_2 - \varepsilon v$ for $\varepsilon > (b_2'b_1)/(v'b_1)$ satisfies
$$
x_\varepsilon'b_2 = \|b_2\|^2 > 0
\qquad\text{and}\qquad
x_\varepsilon'b_1 = b_2'b_1 - \varepsilon v'b_1 < 0.
$$

\bigskip\noindent
These two cases put together, at least one of the two sets
$$
U_{12} = \{x \in \mathbb{R}^d : x'b_1 > 0 \text{ and } x'b_2 < 0\}
\qquad\text{and}\qquad
U_{21} = \{x \in \mathbb{R}^d : x'b_1 < 0 \text{ and } x'b_2 > 0\}
$$
is nonempty.
Since $U_{12} \cup U_{21}$ is open as a union of open sets and $\emptyset \neq U_{12} \cup U_{21} \subseteq D$, there exists a nonempty open set $D^o \subset D$.
Because $D$ is a cone, there exists sufficiently small $\delta \in (0,\infty)$ such that $r D^o \in B_r(0)$.
But then,
$$
\lambda_d(D \cap B_r(0)) \ge \lambda_d(\delta D^o) > 0.
$$

Now apply Assumption \ref{as:local_full_support} with $A = D \cap B_r(0)$ to conclude
$$
\Pr(\1\{X'b_1 \ge 0\} \neq \1\{X'b_2 \ge 0\})
\ge
\Pr(X \in D \cap B_r(0)) > 0
$$
as claimed in the statement of the proposition.
\qed
%%%%%%%%%%%%%%%%%%%%%%%%%%%%%%%%%%%%%%%%%

%%%%%%%%%%%%%%%%%%%%%%%%%%%%%%%%%%%%%%%%%
\subsection{Proof of Proposition \ref{pp:surrogate}}\label{sec:pp:surrogate}
%%%%%%%%%%%%%%%%%%%%%%%%%%%%%%%%%%%%%%%%%
\bigskip\noindent
\textit{Proof of Proposition \ref{pp:surrogate}.}
Let $b_\phi = \arg\max_{b \in B} Q_\phi(b)$, which is unique under Assumptions \ref{as:bayes}, \ref{as:surrogate}, and \ref{as:moments} by Lemma \ref{lm:existence_uniqueness}.
We are going to show that it satisfies $\1\{X'b_\phi \ge 0\} = \1\{\eta(X) \ge 1/2\}$ a.s.
Define a function $r: \mathbb{R}^2 \to \mathbb{R}$ by
$$
r(t,u) = h(t) \cdot \phi(u) + (1-h(t)) \cdot \phi(-u)
$$
By Assumption \ref{as:index} \eqref{as:index:single_index}, we can write $Q_\phi(b)$ in terms of this function as
$$
Q_\phi(b) = \E[r(T,X'b)].
$$
Let $b \in \mathbb{R}^d$.
By Assumption \ref{as:index} \eqref{as:index:linear}, there exists $a_b \in \mathbb{R}$ and a random variable $W_b$ such that 
$$
X'b = a_b T + W_b, \qquad\E[W_b|T]=0.
$$
Hence, $Q_\phi(b)$ can be further rewritten as
$$
Q_\phi(b) = \E[r(T,a_b T + W_b)].
$$
Since $\phi$ is strictly concave under Assumption \ref{as:surrogate}, the map
$$
\mathbb{R} \ni u \mapsto r(t,u) = h(t)\phi(u) + (1-h(t))\phi(-u)
$$
is strictly concave for each $t \in \mathbb{R}$.
Therefore, we can apply the conditional Jensen's inequality to obtain
$$
\E[r(T,a_b T + W_b)|T] 
\le
r(T,\E[a_b T + W_b|T])
=
r(T,a_b T),
$$
where the equality follows from $\E[W_b|T]=0$ and the conditioning theorem.
Taking expectations on both sides and applying the law of iterated expectations, we get
$$
Q_\phi(b) \le Q_\phi(a_b b_0).
$$
In particular, since $b_\phi \in \arg\max_{b \in B} Q_\phi(b)$, it follows that
$$
Q_\phi(b_\phi) \leq Q_\phi(a_{b_\phi} b_0) \leq Q_\phi(b_\phi),
$$
and the uniqueness $\{b_\phi\} = \arg\max_{b \in B} Q_\phi(b)$ further implies $b_\phi = a_{b_\phi} b_0$.
Define 
$$
\Psi(a) = Q_\phi(ab_0) = \E[r(T,aT)].
$$
We are going to argue that $a_{b_\phi} > 0$.
By way of contradiction, suppose that $a_{b_\phi} \leq 0$ maximizes $\Psi$.
Let us branch into two cases.

\bigskip\noindent
{\bf Case:} $a_{b_\phi} < 0$.
If $t>0$, then $h(t) > 1/2$ by Assumption \ref{as:index} \eqref{as:index:single_index}.
Since $\phi$ is strictly increasing under Assumption \ref{as:surrogate}, it follows that
\begin{align*}
&r(t,|a_{b_\phi}|t)-r(t,a_{b_\phi} t)
\\
=&
h(t) \cdot \phi(|a_{b_\phi}|t) + (1-h(t)) \cdot \phi(-|a_{b_\phi}|t)
-
h(t) \cdot \phi(-|a_{b_\phi}|t) - (1-h(t)) \cdot \phi(|a_{b_\phi}|t)
\\
=&
(2h(t) - 1) \cdot (\phi(|a_{b_\phi}|t) - \phi(-|a_{b_\phi}|t))
>0.
\end{align*}
Similar lines of calculations yield
$
r(t,|a_{b_\phi}|t) - r(t,a_{b_\phi} t) > 0
$
for the case of $t<0$ too.
Thus,
$
r(t,|a_{b_\phi}|t) - r(t,a_{b_\phi} t) > 0
$
holds whenever $t \neq 0$.
Since Assumption \ref{as:bayes} \eqref{as:bayes:nondegeneracy} implies $\Pr(T \neq 0)>0$,
$$
\Psi(|a_{b_\phi}|) > \Psi(a_{b_\phi}).
$$
Note that Assumption \ref{as:moments} \eqref{as:moments:convex} implies that $ab_0 \in B \Longrightarrow |a|b_0 \in B$. 
But then, the above inequality is a contradiction with $a_{b_\phi}$ being a maximizer of $\Psi$.
$\blacktriangle$

\bigskip\noindent
{\bf Case:} $a_{b_\phi} = 0$.
By Assumption \ref{as:surrogate}, we can write the derivative of $\Psi$ as
$$
\Psi'(0) = \phi'(0) \E[T(2h(T)-1)],
$$
where Assumptions \ref{as:surrogate} and \ref{as:index} \eqref{as:index:bounded_moment}--\eqref{as:index:single_index} ensures the differentiability.
Assumption \ref{as:index} \eqref{as:index:single_index} implies
$$
T (2h(T)-1) \geq 0 \qquad\text{a.s.}
$$
with the inequality satisfied with strictly inequality when $T \neq 0$.
Since Assumption \ref{as:bayes} \eqref{as:bayes:nondegeneracy} implies $\Pr(T \neq 0)>0$, integration yields
$$
\E[T(2h(T)-1)]>0.
$$
Since $\phi'(0)>0$ under Assumption \ref{as:surrogate}, it follows that 
$$
\Psi'(0)>0,
$$
contradicting with the first-order condition for $a_{b_\phi} = 0$ being the maximizer of $\Psi$.
$\blacktriangle$

\bigskip\noindent
Since we have shown that $a_{b_\phi}>0$, it follows that
$$
\1\{X'b_\phi \ge 0\} = \1\{a_{b_\phi} T \ge 0\} = \1\{T \ge 0\}.
$$
Furthermore, Assumption \ref{as:bayes} \eqref{as:bayes:correct} implies
$$
\1\{T \ge 0\} = \1\{\eta(X) \ge 1/2\} \qquad\text{a.s.}
$$
Combining the last two inequalities yields
$
\1\{X'b_\phi \ge 0\} = \1\{\eta(X) \ge 1/2\}
$
almost surely.
\qed
%%%%%%%%%%%%%%%%%%%%%%%%%%%%%%%%%%%%%%%%%

%%%%%%%%%%%%%%%%%%%%%%%%%%%%%%%%%%%%%%%%%
\subsection{Proof of Corollary \ref{cl:root-n}}\label{sec:thm_root-n}
%%%%%%%%%%%%%%%%%%%%%%%%%%%%%%%%%%%%%%%%%
\bigskip\noindent
\textit{Proof of Corollary \ref{cl:root-n}}. To show consistency of $\widehat{b}$, we verify the conditions of the argmax theorem \citep[Theorem 2.1]{newey1994large}. Note that
\begin{align*}
Q_{\phi, n}(b)=\frac{1}{n} \sum_{i=1}^n \ell_\phi(Z_i, b),~~~~ Q_\phi(b)=\mathbb{E}[\ell_\phi(Z_i, b)],
\end{align*}
where $\ell_\phi(Z_i, b):=Y_i \phi(X_i^{\prime} b)+(1-Y_i) \phi(-X_i^{\prime} b)$. 

We first show the uniform convergence
\begin{align}\label{eq:cond1}
\sup _{b \in B}\left\vert Q_{\phi, n}(b)-Q_\phi(b)\right\vert \rightarrow_{p} 0.
\end{align}
Note that $\mathbb{E}[F_0(Z_i)]\leq\mathbb{E}[\vert F_0(Z_i)\vert ]\leq (\mathbb{E}[F_0(Z_i)^2])^{1/2}<\infty$ by Assumption \ref{as:asymptotic} \eqref{as:asymptotic4}.
Also, $z \mapsto \ell_\phi(z, b)$ is continuous for each $b \in B$ (Assumption \ref{as:asymptotic} \eqref{as:asymptotic3}). 
Thus, the function class $\{\ell_\phi(\cdot, b): b \in B\}$ is pointwise measurable, uniformly bounded by the integrable envelope $F_0$ and equicontinuous in $b$ under Assumption \ref{as:asymptotic} \eqref{as:asymptotic3}--\ref{as:asymptotic} \eqref{as:asymptotic4}. In addition, the parameter space $B$ is compact (Assumption \ref{as:asymptotic} \eqref{as:asymptotic0}). 
Hence, by the Uniform Weak Law of Large Numbers \citep[][Lemma 2.4]{newey1994large}, we obtain \eqref{eq:cond1} where $Q_\phi(b)$ is continuous.

By Assumption \ref{as:asymptotic} \eqref{as:asymptotic3}, $Q_\phi$ is twice continuously differentiable near $b_\phi$ and $H=\nabla^2_b Q_\phi(b_\phi)$ is negative definite.
Therefore, by the definition of $b_\phi$,
Taylor's theorem yields
\begin{align*}
Q_\phi(b)=Q_\phi(b_\phi)+\frac{1}{2}(b-b_\phi)^{\prime} H(b-b_\phi)+o(\Vert b-b_\phi\Vert^2).
\end{align*}
Since $H$ is negative definite, there exists $c>0$ such that $(b-b_\phi)^{\prime} H(b-b_\phi) \leq-c\Vert b-b_\phi\Vert^2$. 
Thus,
\begin{align}\label{eq:unique}
Q_\phi(b) \leq Q_\phi(b_\phi)-c\Vert b-b_\phi\Vert^2
\end{align}
for all $b$ in a neighborhood of $b_\phi$. 
By the compactness of $B$ (Assumption \ref{as:asymptotic} \eqref{as:asymptotic0}), the uniform convergence \eqref{eq:cond1}, the fact $Q_\phi$ is continuous, and the inequality \eqref{eq:unique}, the argmax continuous mapping theorem \citep[][Theorem 2.1]{newey1994large} implies the consistency of $\widehat{b}$.

Next, we are going to show the asymptotic normality. 
By Assumption \ref{as:asymptotic} \eqref{as:asymptotic3}, we can apply the coordinate-wise mean value expansion to obtain
\begin{align}\label{eq:mean_value}
0=\sqrt{n} \nabla_b Q_{\phi, n}(b_\phi)+\nabla_b^2 Q_{\phi, n}(\widetilde{b}) \sqrt{n}(\widehat{b}-b_\phi),
\end{align}
where $\widetilde{b}$ lies on the line segment between $\widehat{b}$ and $b_\phi$.

By the definition of $b_\phi$ and Assumption \ref{as:asymptotic} \eqref{as:asymptotic3}, we have
\begin{align*}
\mathbb{E}[\psi(Z_i, b_\phi)]=\nabla_b Q_\phi(b_\phi)=\mathbf{0}_{d\times 1} 
\quad\text{and}\quad
\Omega=\mathbb{E}[\psi(Z_i, b_\phi) \psi(Z_i, b_\phi)^{\prime}]<\infty.
\end{align*}
Since $\{Z_i\}$ are i.i.d. under Assumption \ref{as:asymptotic} \eqref{as:asymptotic1}, it follows from the multivariate Lindeberg-Feller Central Limit Theorem that
\begin{align}\label{eq1}
\sqrt{n} \nabla_b Q_{\phi, n}
=
\frac{1}{\sqrt{n}} \sum_{i=1}^n \psi(Z_i, b_\phi) \rightarrow_{d} \mathcal{N}(\mathbf{0}_{d\times 1}, \Omega).
\end{align}

The map $z \mapsto \nabla_b^2 \ell_\phi(z, b)$ is continuous for each $b \in B$ by Assumption \ref{as:asymptotic} \eqref{as:asymptotic3}.
Therefore, $\{\nabla_b^2 \ell_\phi(\cdot,b) : b \in B\}$ is pointwise measurable.
Note also that its envelope satisfies $\mathbb{E}[F_2(Z_i)]\leq \mathbb{E}[\vert F_2(Z_i)\vert]\leq (\mathbb{E}[F_2(Z_i)^2])^{1/2}<\infty$ by Assumption \ref{as:asymptotic} \eqref{as:asymptotic4}.
Furthermore, the parameter space $B$ is compact (Assumption \ref{as:asymptotic} \eqref{as:asymptotic0}).
By the Uniform Weak Law of Large Numbers \citep[][Lemma 2.4]{newey1994large},  therefore,
\begin{equation}\label{eq:ulln_nabla2}
\sup _{b \in B}\Vert \nabla_b^2 Q_{\phi,n}(b)-\nabla_b^2 Q_\phi(b)\Vert \rightarrow_{p} 0.
\end{equation}
In turn, this uniform consistency \eqref{eq:ulln_nabla2}, the triangle inequality, the fact that $\widetilde{b}$ lies on the line segment between $\widehat{b}$ and $b_\phi$, and the consistency of $\widehat{b}$ established in the first part of the proof together yield
\begin{equation}\label{eq:consistency_H}
\nabla_b^2 Q_{\phi,n}(\widetilde{b}) \rightarrow_{p} H.
\end{equation}

Combining \eqref{eq:mean_value}, \eqref{eq1} and \eqref{eq:consistency_H} yields the desired asymptotic normality result.
\qed

%%%%%%%%%%%%%%%%%%%%%%%%%%%%%%%%%%%%%%%%%
\subsection{Proof of Corollary \ref{cl:var}}\label{sec:thm_var}
%%%%%%%%%%%%%%%%%%%%%%%%%%%%%%%%%%%%%%%%%

\bigskip\noindent\textit{Proof of Corollary \ref{cl:var}}.
We are going to show the consistency of $\widehat{H}$ and $\widehat{\Omega}$ for $H$ and $\Omega$, respectively.
Consequently, the statement follows by the Continuous Mapping Theorem under Assumption \ref{as:asymptotic} \eqref{as:asymptotic3}.

\medskip
\noindent\textbf{Step 1: Consistency of $\widehat{H}$.}
Consider the decomposition
\begin{equation}\label{eq:decomposition_H}
\widehat{H} - H
=
B_1 + B_2,
\end{equation}
where
\begin{align*}
B_1 =& {\nabla_b^2 Q_{\phi,n}(\widehat{b}) - \nabla_b^2 Q_{\phi,n}(b_\phi)}
\quad\text{and}
\\
B_2 =& {\nabla_b^2 Q_{\phi,n}(b_\phi) - H}.
\end{align*}

For the second term $B_2$ in the decomposition \eqref{eq:decomposition_H}, we have
$
B_2 \rightarrow_p 0
$
by \eqref{eq:ulln_nabla2}.
For the first term $B_1$ in the decomposition \eqref{eq:decomposition_H}, we have
$
B_2 \rightarrow_p 0
$
by the consistency of $\widehat{b}$ and the continuity of $\nabla_b^2 Q_{\phi,n}$ under Assumption \ref{as:asymptotic} \eqref{as:asymptotic3}.

\medskip
\noindent\textbf{Step 2: Consistency of $\widehat{\Omega}$.} 
Consider the decomposition
\begin{equation}\label{eq:decomposition_Omega}
\widehat{\Omega} - \Omega
= 
C_1 + C_2,
\end{equation}
where
\begin{align*}
C_1 =& {\widehat{\Omega} - \frac{1}{n}\sum_{i=1}^n \psi(Z_i,b_\phi)\psi(Z_i,b_\phi)'}
\quad\text{and}
\\
C_2 =& {\frac{1}{n}\sum_{i=1}^n \psi(Z_i,b_\phi)\psi(Z_i,b_\phi)' - \Omega}.
\end{align*}
For the second term $C_2$ in \eqref{eq:decomposition_Omega}, we have
\begin{align*}
C_2 \rightarrow_{p}0,
\end{align*}
by the Law of Large Numbers under Assumption \ref{as:asymptotic} \eqref{as:asymptotic1} and the integrability condition on the summand which can be shown by Assumption \ref{as:asymptotic} \eqref{as:asymptotic3} that controls the effect of $\Omega$ through its bounded eigenvalues.

Now it remains to show that the first term \(C_1\) in the decomposition \eqref{eq:decomposition_Omega} also converges in probability to zero.
By the coordinate-wise Mean Value Theorem, for each \(z\), there exists some \(\widetilde b(z)\) on the
	line segment between \(\widehat b\) and \(b_\phi\) such that
	\[
	\psi(z,\widehat b)-\psi(z,b_\phi)
	=
	\nabla_b^2\ell_\phi(z,\widetilde b(z))\,(\widehat b-b_\phi).
	\]
	Hence,
	\begin{align}
		\label{eq:bound_psi_revised2}
		\Vert \psi(z,\widehat b)-\psi(z,b_\phi)\Vert
		&\le
		\Vert \nabla_b^2\ell_\phi(z,\widetilde b(z))\Vert\,\Vert \widehat b-b_\phi\Vert
		\le
		F_2(z)\Vert \widehat b-b_\phi\Vert .
	\end{align}
	Moreover, by definition of \(F_1\) of Assumption \ref{as:asymptotic} \eqref{as:asymptotic4},
	\begin{align}
		\label{eq:bound_psi_bphi_revised2}
		\Vert \psi(z,b_\phi)\Vert
		=
		\Vert \nabla_b\ell_\phi(z,b_\phi)\Vert
		\le
		F_1(z).
	\end{align}
	Therefore, by the triangle inequality and \eqref{eq:bound_psi_revised2},
	\begin{align}
		\label{eq:bound_psi_hat_revised2}
		\Vert \psi(z,\widehat b)\Vert
		&\le
		\Vert \psi(z,b_\phi)\Vert+\Vert \psi(z,\widehat b)-\psi(z,b_\phi)\Vert
		\le
		F_1(z)+F_2(z)\Vert \widehat b-b_\phi\Vert.
	\end{align}
Add and subtract \(\psi(z,b_\phi)\psi(z,\widehat b)^\prime\) to obtain
\begin{align*}
		&\psi(z,\widehat b)\psi(z,\widehat b)^\prime
		-
		\psi(z,b_\phi)\psi(z,b_\phi)^\prime=
		\bigl(\psi(z,\widehat b)-\psi(z,b_\phi)\bigr)\psi(z,\widehat b)^\prime
		+
		\psi(z,b_\phi)\bigl(\psi(z,\widehat b)-\psi(z,b_\phi)\bigr)^\prime.
\end{align*}
	Thus, by the triangle inequality and submultiplicativity of the matrix norm,
	\begin{align*}
		&\left\Vert
		\psi(z,\widehat b)\psi(z,\widehat b)^\prime
		-
		\psi(z,b_\phi)\psi(z,b_\phi)^\prime
		\right\Vert \\
		&\qquad\le
		\Vert \psi(z,\widehat b)-\psi(z,b_\phi)\Vert\,\Vert \psi(z,\widehat b)\Vert
		+
		\Vert \psi(z,b_\phi)\Vert\,\Vert \psi(z,\widehat b)-\psi(z,b_\phi)\Vert \\
		&\qquad=
		\Vert \psi(z,\widehat b)-\psi(z,b_\phi)\Vert
		\Bigl(\Vert \psi(z,\widehat b)\Vert+\Vert \psi(z,b_\phi)\Vert\Bigr).
	\end{align*}
Applying \eqref{eq:bound_psi_revised2}--\eqref{eq:bound_psi_hat_revised2}, we obtain
	\begin{align*}
		&\left\Vert
		\psi(z,\widehat b)\psi(z,\widehat b)^\prime
		-
		\psi(z,b_\phi)\psi(z,b_\phi)^\prime
		\right\Vert \\
		&\qquad\le
		F_2(z)\Vert \widehat b-b_\phi\Vert
		\Bigl(
		F_1(z)+F_2(z)\Vert \widehat b-b_\phi\Vert+F_1(z)
		\Bigr) \\
		&\qquad\lesssim
		F_2(z)\Vert \widehat b-b_\phi\Vert
		\Bigl(
		F_1(z)+F_2(z)\Vert \widehat b-b_\phi\Vert
		\Bigr).
	\end{align*}
	Consequently,
	\begin{align}
		&\left\Vert
		\frac{1}{n}\sum_{i=1}^n
		\Bigl(
		\psi(Z_i,\widehat b)\psi(Z_i,\widehat b)^\prime
		-
		\psi(Z_i,b_\phi)\psi(Z_i,b_\phi)^\prime
		\Bigr)
		\right\Vert \notag\\
		&\le
		\frac{1}{n}\sum_{i=1}^n
		\left\Vert
		\psi(Z_i,\widehat b)\psi(Z_i,\widehat b)^\prime
		-
		\psi(Z_i,b_\phi)\psi(Z_i,b_\phi)^\prime
		\right\Vert \notag\\
		&\lesssim
		\Vert \widehat b-b_\phi\Vert
		\frac{1}{n}\sum_{i=1}^n
		F_2(Z_i)
		\Bigl(
		F_1(Z_i)+F_2(Z_i)\Vert \widehat b-b_\phi\Vert
		\Bigr) \notag\\
		&=
		\Vert \widehat b-b_\phi\Vert
		\left(
		\frac{1}{n}\sum_{i=1}^n F_1(Z_i)F_2(Z_i)
		\right)
		+
		\Vert \widehat b-b_\phi\Vert^2
		\left(
		\frac{1}{n}\sum_{i=1}^n F_2(Z_i)^2
		\label{eq:C1_bound_revised2}
		\right).
	\end{align}
	
By Cauchy--Schwarz inequality,
	\[
	\mathbb E[F_1(Z)F_2(Z)]
	\le
	\bigl(\mathbb E[F_1(Z)^2]\bigr)^{1/2}
	\bigl(\mathbb E[F_2(Z)^2]\bigr)^{1/2}
	<\infty
	\]
under Assumption \ref{as:asymptotic} \eqref{as:asymptotic4}.
Therefore, the Law of Large Numbers assisted under Assumption \ref{as:asymptotic} \eqref{as:asymptotic1} yields
	\[
	\frac{1}{n}\sum_{i=1}^n F_1(Z_i)F_2(Z_i)=O_p(1),
	\qquad
	\frac{1}{n}\sum_{i=1}^n F_2(Z_i)^2=O_p(1).
	\]
Since \(\widehat b\rightarrow_{p} b_\phi\) by Corollary \ref{cl:root-n}, we have \(\Vert \widehat b-b_\phi\Vert=o_p(1)\). It then follows from
	\eqref{eq:C1_bound_revised2} that
	\[
	\left\Vert
	\frac{1}{n}\sum_{i=1}^n
	\Bigl(
	\psi(Z_i,\widehat b)\psi(Z_i,\widehat b)^\prime
	-
	\psi(Z_i,b_\phi)\psi(Z_i,b_\phi)^\prime
	\Bigr)
	\right\Vert
	=o_p(1),
	\]
	which establishes \(C_1\rightarrow_{p}0\).

%\newpage
%By the law of large numbers, Assumption \ref{as:asymptotic}.\eqref{as:asymptotic4} and Cauchy-Schwarz inequality, 
%\begin{align*}
%\frac{1}{n}\sum_{i=1}^n F_2(Z_i)F_1(Z_i)=O_p(1)
%\quad\text{and}\quad
%\frac{1}{n}\sum_{i=1}^n F_2(Z_i)^2=O_p(1).
%\end{align*}
%Since $\widehat{b} \rightarrow_{p} b_\phi$ by Corollary \ref{cl:root-n}, it follows that $C_2=o_p(1)$.

%Therefore, $\widehat{\Omega}\rightarrow_{p}\Omega$.

\medskip
\noindent\textbf{Step 3: Consistency of $\widehat{V}$.}
In light of the results from Steps 1 and 2, the Continuous Mapping Theorem yields
$
\widehat{V}=\widehat{H}^{-1}\widehat{\Omega}\widehat{H}^{-1}\rightarrow_{p}H^{-1}\Omega H^{-1}$.
\qed

%%%%%%%%%%%%%%%%%%%%%%%%%%%%%%%%%%%%%%%%%
\subsection{Proof of Corllary \ref{cl:studentized_bootstrap}}\label{sec:studentized_bootstrap}
%%%%%%%%%%%%%%%%%%%%%%%%%%%%%%%%%%%%%%%%%
\noindent\textit{Proof of Corllary \ref{cl:studentized_bootstrap}}. 
The conditions of Corollary \ref{cl:root-n} yield the linear expansion
\begin{align}
	\label{eq:lin_rep_revised}
	\widehat b-b_\phi
	=
	H^{-1}\frac{1}{n}\sum_{i=1}^n \psi(Z_i,b_\phi)+r_n,~~\text{with}~~
	\Vert r_n\Vert=o_p(n^{-1/2}).
\end{align}
Since \(\theta=a^\prime b_\phi\) and \(\widehat\theta=a^\prime \widehat b\), multiplying
\eqref{eq:lin_rep_revised} by \(a^\prime\) gives
\begin{align}
	\label{eq:theta_lin_revised}
	\sqrt n(\widehat\theta-\theta)
	=
	\frac{1}{\sqrt n}\sum_{i=1}^n \varphi(Z_i)+o_p(1),~~\text{with}~~\varphi(Z):=a^\prime H^{-1}\psi(Z,b_\phi).
\end{align}
Because \(\mathbb E[\psi(Z,b_\phi)]=0\) and
\(\mathbb E[\psi(Z,b_\phi)\psi(Z,b_\phi)^\prime]=\Omega\), we have
\begin{align*}
	\mathbb E[\varphi(Z)]=0~~\text{and}~~\mathbb E[\varphi(Z)^2]
	=
	a^\prime H^{-1}\Omega H^{-1}a
	=:\sigma^2.
\end{align*}
Then \eqref{eq:theta_lin_revised} implies
\begin{align}
	\label{eq:oracle_equiv}
	\frac{\sqrt n(\widehat\theta-\theta)}{\sigma}
	=
	\frac{1}{\sigma\sqrt n}\sum_{i=1}^n \varphi(Z_i)+o_p(1).
\end{align}
It shows that the leading term of the centered estimator is the normalized sum of i.i.d. variables \(\varphi(Z_1),\ldots,\varphi(Z_n)\). Hence, under Assumptions 
\ref{as:asymptotic}\eqref{as:asymptotic1} and \ref{as:bootstrap}, Edgeworth Theorem \citep[Equation (5.5) and Theorem 5.1]{hall1992bootstrap} applies
to \(\frac{1}{\sigma\sqrt n}\sum_{i=1}^n \varphi(Z_i)\), yielding
\begin{align}
	\label{eq:edgeworth_oracle}
	\sup_{t\in\mathbb R}
	\left|
	\mathbb P\left(\frac{1}{\sigma\sqrt n}\sum_{i=1}^n \varphi(Z_i)\le t\right)
	-
	\left(
	\Phi(t)+\frac{1}{\sqrt n}p_1(t)
	\right)
	\right|
	=
	o(n^{-1/2}),
\end{align}
where
\begin{align}
	\label{eq:p1_oracle}
	p_1(t)
	=
	\frac{\kappa_3}{6}(1-t^2)\phi(t),
	\qquad
	\kappa_3
	:=
	\frac{\mathbb E[\varphi(Z)^3]}{\sigma^3}.
\end{align}

Next, let
\begin{align*}
	T_n
	=
	\frac{\sqrt n(\widehat\theta-\theta)}{\widehat\sigma},
	\qquad
	\widehat\sigma^2=a^\prime \widehat V a.
\end{align*}
By Corollary \ref{cl:var}, \(\widehat\sigma^2\to_p \sigma^2\), and therefore
\(\widehat\sigma\to_p \sigma\) and \(\widehat\sigma^{-1}\to_p \sigma^{-1}\). Hence,
\begin{align}
	\label{eq:Tn_slutsky}
	T_n
	=
	\frac{1}{\widehat\sigma}
	\left(
	\frac{1}{\sqrt n}\sum_{i=1}^n \varphi(Z_i)
	\right)
	+o_p(1)
	=
	\frac{1}{\sigma\sqrt n}\sum_{i=1}^n \varphi(Z_i)+o_p(1),
\end{align}
which is sufficient for first-order asymptotic normality.

To obtain an Edgeworth expansion for \(T_n\), we further show the stronger second-order approximation that \(\widehat\sigma\) itself admits a root-\(n\) expansion. 
Specifically, the definition of $\sigma^2$ gives
\begin{align}
	\label{eq:sigma_lin}
	\widehat\sigma^2-\sigma^2
	=
	\frac{1}{n}\sum_{i=1}^n \xi(Z_i)+o_p(n^{-1/2}),~~\text{with}~~
	\mathbb E[\xi(Z)]=0,
\end{align} 
for some measurable \(\xi\) satisfying the moment and Cram\'er-type conditions in
Assumption \ref{as:bootstrap}. 
Then, by a Taylor expansion of the map
\(x\mapsto x^{-1/2}\) around \(x=\sigma^2\),
\begin{align}
	\label{eq:inv_sigma_hat_expansion}
	\widehat\sigma^{-1}
	=
	\sigma^{-1}
	-
	\frac{1}{2\sigma^3}(\widehat\sigma^2-\sigma^2)
	+
	o_p(n^{-1/2}).
\end{align}
Combining \eqref{eq:theta_lin_revised} and \eqref{eq:inv_sigma_hat_expansion}, we obtain
\begin{align}
	\label{eq:Tn_smooth}
	T_n
	=
	G\!\left(
	\frac{1}{n}\sum_{i=1}^n \varphi(Z_i),
	\frac{1}{n}\sum_{i=1}^n \xi(Z_i)
	\right)
	+
	o_p(n^{-1/2})
\end{align}
for a smooth function \(G\) with \(G(0,0)=0\). 
Therefore \(T_n\) is a smooth function of sample averages of i.i.d. variables, and Edgeworth theorem \citep[Equation (5.5) and Theorem 5.1]{hall1992bootstrap} for smooth statistics yields
\begin{align}
	\label{eq:edgeworth_true_revised}
	\sup_{t\in\mathbb R}
	\left|
	\mathbb P(T_n\le t)
	-
	\left(
	\Phi(t)+\frac{1}{\sqrt n}p_{1}(t)
	\right)
	\right|
	=
	o(n^{-1/2}),
\end{align}
where \(p_{1}(t)\) is the first-order correction term associated with the studentized
statistic \(T_n\).

Let $\{Z_1^\ast,...,Z_n^\ast\}$ denote a bootstrap draw. All bootstrap stochastic orders below are understood conditionally on the data, in probability. Since \(\widehat b^\ast\) is the bootstrap estimator, it satisfies the bootstrap first-order condition
\begin{align*}
	\frac{1}{n}\sum_{i=1}^n \psi(Z_i^\ast,\widehat b^\ast)=0.
\end{align*}
Applying a first-order Taylor expansion of $b\mapsto n^{-1}\sum_{i=1}^n \psi(Z_i^\ast,b)$ around \(b=\widehat b\), we obtain
\begin{align*}
	0&=\frac{1}{n}\sum_{i=1}^n \psi(Z_i^\ast,\widehat b^\ast)=\frac{1}{n}\sum_{i=1}^n \psi(Z_i^\ast,\widehat b)+\widehat H(\widehat b^\ast-\widehat b)+R_n^\ast,
\end{align*}
where \(\|R_n^\ast\|=o_{p^\ast}(n^{-1/2})\). 
Subtracting $n^{-1}\sum_{i=1}^n \psi(Z_i,\widehat b)=0$ from both sides yields
\begin{align}
	\label{eq:boot_lin_revised}
	\widehat b^\ast-\widehat b
	=
	\widehat H^{-1}
	\frac{1}{n}\sum_{i=1}^n
	\Bigl(
	\psi(Z_i^\ast,\widehat b)-\psi(Z_i,\widehat b)
	\Bigr)
	+
	r_n^\ast,
\end{align}
where $\|r_n^\ast\|=o_{p^\ast}(n^{-1/2})$.
Since \(\widehat\theta=a^\prime\widehat b\) and \(\widehat\theta^\ast=a^\prime\widehat b^\ast\),
multiplying \eqref{eq:boot_lin_revised} by \(a^\prime\) gives
\begin{align}
	\label{eq:boot_theta_lin}
	\sqrt n(\widehat\theta^\ast-\widehat\theta)
	=
	\frac{1}{\sqrt n}\sum_{i=1}^n
	\Bigl(
	\widehat\varphi(Z_i^\ast)-\widehat\varphi(Z_i)
	\Bigr)
	+
	o_{p^\ast}(n^{-1/2}),
\end{align}
where \(\widehat\varphi(z):=a^\prime \widehat H^{-1}\psi(z,\widehat b)\). Since
\begin{align*}
	\frac{1}{n}\sum_{i=1}^n \widehat\varphi(Z_i)
	=
	a^\prime \widehat H^{-1}\frac{1}{n}\sum_{i=1}^n \psi(Z_i,\widehat b)
	=
	0,
\end{align*}
it follows that \eqref{eq:boot_theta_lin} simplifies to
\begin{align}
	\label{eq:boot_theta_lin_simplified}
	\sqrt n(\widehat\theta^\ast-\widehat\theta)
	=
	\frac{1}{\sqrt n}\sum_{i=1}^n \widehat\varphi(Z_i^\ast)
	+
	o_{p^\ast}(n^{-1/2}).
\end{align}

Next, let \(\widehat\sigma^{\ast2}=a^\prime \widehat V^\ast a\) denote the bootstrap variance estimator and \(\widehat\sigma^2=a^\prime \widehat V a\) its sample analogue. By the bootstrap analogue of \eqref{eq:sigma_lin},
	\begin{align}
		\label{eq:boot_sigma_lin_detailed}
		\widehat\sigma^{\ast2}-\widehat\sigma^2
		=
		\frac{1}{n}\sum_{i=1}^n
		\Bigl(
		\widehat\xi(Z_i^\ast)-\mathbb E^\ast[\widehat\xi(Z_i^\ast)]
		\Bigr)
		+
		o_{p^\ast}(n^{-1/2}),
	\end{align}
	where \(\mathbb E^\ast\) denotes expectation under the bootstrap law conditional on data. Since \(Z_1^\ast,...,Z_n^\ast\) are i.i.d. draws from the empirical distribution conditionally on the data,
	\begin{align*}
		\mathbb E^\ast[\widehat\xi(Z_1^\ast)]
		=
		\frac{1}{n}\sum_{i=1}^n \widehat\xi(Z_i).
	\end{align*}
	Hence \eqref{eq:boot_sigma_lin_detailed} may be written as
	\begin{align}
		\label{eq:boot_sigma_lin_detailed_centered}
		\widehat\sigma^{\ast2}-\widehat\sigma^2
		=
		\frac{1}{n}\sum_{i=1}^n
		\left(
		\widehat\xi(Z_i^\ast)-\frac{1}{n}\sum_{j=1}^n \widehat\xi(Z_j)
		\right)
		+
		o_{p^\ast}(n^{-1/2}).
	\end{align}
	Applying a Taylor expansion of the map \(x\mapsto x^{-1/2}\) around \(x=\widehat\sigma^2\) gives
	\begin{align}
		\label{eq:boot_inv_sigma_detailed}
		(\widehat\sigma^\ast)^{-1}
		=
		\widehat\sigma^{-1}
		-
		\frac{1}{2\widehat\sigma^3}
		\bigl(
		\widehat\sigma^{\ast2}-\widehat\sigma^2
		\bigr)
		+
		o_{p^\ast}(n^{-1/2}).
	\end{align}
	Substituting \eqref{eq:boot_sigma_lin_detailed_centered} into \eqref{eq:boot_inv_sigma_detailed}, we obtain
	\begin{align}
		\label{eq:boot_inv_sigma_detailed_2}
		(\widehat\sigma^\ast)^{-1}
		=
		\widehat\sigma^{-1}
		-
		\frac{1}{2\widehat\sigma^3}
		\left[
		\frac{1}{n}\sum_{i=1}^n
		\left(
		\widehat\xi(Z_i^\ast)-\frac{1}{n}\sum_{j=1}^n \widehat\xi(Z_j)
		\right)
		\right]
		+
		o_{p^\ast}(n^{-1/2}).
	\end{align}	
	Therefore,
	\begin{align*}
		T_n^\ast
		:=
		\frac{\sqrt n(\widehat\theta^\ast-\widehat\theta)}{\widehat\sigma^\ast}
		=
		\sqrt n(\widehat\theta^\ast-\widehat\theta)\,(\widehat\sigma^\ast)^{-1}.
	\end{align*}
	Combining \eqref{eq:boot_theta_lin_simplified} and \eqref{eq:boot_inv_sigma_detailed_2} and expanding the product yields
	\begin{align}
		\label{eq:Tstar_second_order_detailed}
		T_n^\ast
		&=
		\frac{1}{\widehat\sigma}
		\left(
		\frac{1}{\sqrt n}\sum_{i=1}^n \widehat\varphi(Z_i^\ast)
		\right)-
		\frac{1}{2\widehat\sigma^3}
		\left(
		\frac{1}{\sqrt n}\sum_{i=1}^n \widehat\varphi(Z_i^\ast)
		\right)
		\left[
		\frac{1}{n}\sum_{i=1}^n
		\left(
		\widehat\xi(Z_i^\ast)-\frac{1}{n}\sum_{j=1}^n \widehat\xi(Z_j)
		\right)
		\right]
		+
		o_{p^\ast}(n^{-1/2}),
	\end{align}
	where the remainder term is \(o_{p^\ast}(n^{-1/2})\) because
	\(
	n^{-1/2}\sum_{i=1}^n \widehat\varphi(Z_i^\ast)=O_{p^\ast}(1)
	\)
	and
	\(
	n^{-1}\sum_{i=1}^n
	(
	\widehat\xi(Z_i^\ast)-n^{-1}\sum_{j=1}^n \widehat\xi(Z_j)
	)=O_{p^\ast}(n^{-1/2})
	\)
	conditionally on the data. 
	Thus, \(T_n^\ast\) admits an explicit second-order asymptotic expansion, conditionally on the data, as a smooth function of the bootstrap empirical averages generated by \(\widehat\varphi\) and \(\widehat\xi\). This is the bootstrap counterpart of the second-order smooth-statistic representation in \eqref{eq:Tn_smooth}.
	
	Conditionally on the data, \(Z_1^\ast,...,Z_n^\ast\) are i.i.d. draws from the empirical distribution based on \(Z_1,...,Z_n\). In addition, \eqref{eq:Tstar_second_order_detailed} shows that \(T_n^\ast\) admits, conditionally on the data, a second-order smooth-statistic representation of the same functional form as \eqref{eq:Tn_smooth}, with the population moments and cumulants replaced by their empirical analogues. Therefore, under Assumptions \ref{as:bootstrap} \eqref{as:bootstrap1}--\eqref{as:bootstrap3}, the conditional bootstrap version of the smooth-statistic Edgeworth expansion used to obtain \eqref{eq:edgeworth_true_revised} applies, yielding that
		\begin{align}
			\label{eq:edgeworth_boot_revised}
			\sup_{t\in\mathbb R}
			\left|
			\mathbb P^\ast(T_n^\ast\le t)
			-
			\left(
			\Phi(t)+\frac{1}{\sqrt n}\widehat p_{1}(t)
			\right)
			\right|
			=
			o_p(n^{-1/2}),
		\end{align}
		where \(\widehat p_{1}(t)\) is the bootstrap analogue of the first-order correction term \(p_{1}(t)\), obtained by replacing the population moments and cumulants entering \(p_{1}(t)\) with the corresponding sample analogues. In particular, the skewness component is obtained by replacing
		\begin{align*}
			\kappa_3
			=
			\frac{\mathbb E[\varphi(Z)^3]}
			{\bigl(\mathbb E[\varphi(Z)^2]\bigr)^{3/2}}
		\end{align*}
		with
		\begin{align*}
			\widehat\kappa_3
			=
			\frac{\frac{1}{n}\sum_{i=1}^n \widehat\varphi(Z_i)^3}
			{\left(\frac{1}{n}\sum_{i=1}^n \widehat\varphi(Z_i)^2\right)^{3/2}}.
	\end{align*}

By the Weak Law of Large Numbers and Assumptions \ref{as:asymptotic} \eqref{as:asymptotic4}, \ref{as:bootstrap} \eqref{as:bootstrap1}--\eqref{as:bootstrap2}, we obtain 
\begin{align}\label{eq:moment}
\widehat{\kappa}_3 \rightarrow_{p} \kappa_3~~\text{and}~~\sup_t \vert\widehat{p}_1(t) - p_1(t)\vert = O_p(n^{-1/2}).
\end{align}

Combining \eqref{eq:edgeworth_true_revised} and \eqref{eq:edgeworth_boot_revised} yields
\begin{align*}
	\sup_{t \in \mathbb{R}}\left\vert\mathbb{P}^\ast(T_n^\ast \leq t)-\mathbb{P}(T_n \leq t)\right\vert
	=\frac{1}{\sqrt{n}}(\widehat{p}_1(t)-p_1(t))+o(n^{-1/2})=o(n^{-1/2}),
\end{align*}
where the second equality follows by \eqref{eq:moment}.
Thus, the nonparametric bootstrap achieves asymptotic refinement over the normal approximation with $o(n^{-1/2})$. \qed

\color{black}

	\renewcommand\bibname{\LARGE \textbf {Bibliography}}
	\bibliographystyle{chicago}
	\bibliography{IF}	
\end{document}